\newtheorem{lemma}{Lemma}
\newenvironment{proof}{ \textbf{Proof:} }{ \hfill $\Box$}
\def\bb0{{\mathbb{0}}}
\def\ba{{\mathbf{a}}}
\def\bb{{\mathbf{b}}}
\def\bff{{\mathbf{f}}}
\def\bg{{\mathbf{g}}}
\def\bh{{\mathbf{h}}}
\def\b0{{\mathbf{0}}}
\def\bbC{{\mathbb{C}}}
\def\bbE{{\mathbb{E}}}
\def\bbN{{\mathbb{N}}}
\def\cG{\mathcal{G}}
\def\cO{\mathcal{O}}
\def\cS{\mathcal{S}}
\def\cU{\mathcal{U}}
\def\sfA{\mathsf{A}}
\def\sfB{\mathsf{B}}
\def\sfE{\mathsf{E}}
\def\sfF{\mathsf{F}}
\def\sfH{\mathsf{H}}
\def\sfM{\mathsf{M}}
\def\sfP{\mathsf{P}}
\def\sfQ{\mathsf{Q}}
\def\sfS{\mathsf{S}}
\def\sfT{\mathsf{T}}
\def\sfc{{\mathsf{c}}}
\def\sfd{{\mathsf{d}}}
\def\sfee{{\mathsf{e}}}
\def\sfg{{\mathsf{g}}}
\def\sfi{{\mathsf{i}}}
\def\sfj{{\mathsf{j}}}
\def\sfl{{\mathsf{l}}}
\def\sfm{{\mathsf{m}}}
\def\sfp{{\mathsf{p}}}
\def\sfr{{\mathsf{r}}}
\def\sft{{\mathsf{t}}}
\def\sfu{{\mathsf{u}}}
\def\sfw{{\mathsf{w}}}
\def\sfx{{\mathsf{x}}}
\def\sfy{{\mathsf{y}}}
\def\sfz{{\mathsf{z}}}
\def\sf0{{\mathsf{0}}}
\def\rm0{{\mathrm{0}}}
\def\b0{{\pmb{0}}} 
\begin{document}
	
	\title{	Frequency-selective beamforming and single-shot beam training  with dynamic metasurface antennas
	}

	\author{\IEEEauthorblockN{Nitish  Vikas Deshpande, \textit{Student Member, IEEE,} Joseph Carlson,  Miguel R. Castellanos, \textit{Member, IEEE,}  and Robert W. Heath Jr., \textit{Fellow, IEEE}}\\ 
		\thanks{Nitish Vikas Deshpande, Joseph Carlson,  and Robert W.  Heath Jr. are with the 
			Department of Electrical and Computer Engineering, University of California San Diego, La Jolla, CA 92093 USA (e-mail: nideshpande@ucsd.edu; j4carlson@ucsd.edu;  rwheathjr@ucsd.edu).
			Miguel R. Castellanos is with the Department of Electrical  Engineering and Computer Science at the University of Tennessee, Knoxville, TN 37996 USA (e-mail: mrcastellanos@utk.edu). This material is based upon work supported by the National Science Foundation under grant nos. NSF-ECCS-2435261, NSF ECCS-2414678, the Army Research Office under Grant W911NF2410107, and the Qualcomm Innovation Fellowship.}
	}

	\maketitle
	
	\begin{abstract}
		Dynamic metasurface antennas (DMAs) beamform through low-powered  components that enable  reconfiguration of each radiating element.
	Previous research on a single-user multiple-input-single-output (MISO) system with a dynamic metasurface antenna  at the transmitter has focused on maximizing the beamforming gain at a fixed operating frequency. 
		The DMA, however, has a frequency-selective response that leads to magnitude degradation for frequencies away from the resonant frequency of each element. 
	This causes reduction in beamforming gain if the DMA only operates at a fixed frequency.
	We exploit the frequency reconfigurability of the DMA to dynamically optimize both the operating frequency and the element configuration, maximizing the beamforming gain.
		    We leverage this approach to develop a single-shot beam training procedure using a DMA sub-array architecture that estimates the receiver's angular direction with a single OFDM pilot signal.
		    We evaluate the beamforming gain performance of the DMA array using the receiver's angular direction estimate obtained from  beam training. Our results show that it is sufficient to use a limited number of resonant frequency states to do both beam training and beamforming instead of using an infinite resolution DMA beamformer.    
	\end{abstract}
	
	\begin{IEEEkeywords}
	Dynamic metasurface antenna,  resonant frequency,  frequency-selective beamforming,  beam training
	\end{IEEEkeywords}
	
	\section{Introduction}\label{sec: Introduction }

	While most prior work on DMAs  assumes frequency-flat operation, their inherent ability to reconfigure beamforming properties over a wide bandwidth makes them promising candidates for frequency-selective wireless architectures~\cite{9324910,carlson_qif1}.
	A DMA consists of a waveguide with  multiple radiating slots that 
 can be independently  tuned to  resonate at different frequencies~\cite{boyarsky2021electronically,smith_analysis_2017}.
 Unlike conventional antenna arrays that require external components for beamforming, DMAs have the innate capability to beamform through  low-powered varactor or PIN diodes embedded into the antennas\cite{smith_analysis_2017}.
 Conventional analog components for frequency-selective beamforming such as  true time delay (TTD) elements~\cite{10051948}  and phase-shifters\cite{9735144} consume more power than varactor or PIN diodes\cite{rotman_true_2016}.
The low cost and low power consumption of DMAs make them highly versatile, with potential applications such as large array implementations in gNBs or use in battery-limited UE handsets.

DMAs present a unique set of challenges for beamforming compared to a traditional phased-array architecture because of the differences in their underlying operation mechanism.
The magnitude and phase response of each radiating slot in the DMA is coupled. The DMA architecture also has an intrinsic phase-advance in the waveguide.  
Furthermore, the underlying mechanism for beamforming in fact makes the beamforming frequency-selective. The operating bandwidth and extent of frequency selectivity are tightly coupled together in the design~\cite{carlson_qif1}.
As a result, even a task like configuring a DMA to produce a directive beam for a line-of-sight channel is deceptively complicated.

Prior work  on DMA signal processing is limited to  optimizing narrowband metrics.
The design of beamforming algorithms with DMAs is non-trivial because of the coupled magnitude and phase constraint, also referred to as the Lorentzian constraint in \cite{smith_analysis_2017}. One approach to configure DMA beamforming weights is to relax the Lorentzian constraint, solve for the desired phase-shift, and project the obtained weight on the set of feasible weights on the Lorentzian circle~\cite{9413746}. This relaxation and projection approach was shown to be sub-optimal in \cite{carlson2023hierarchical} which proposed another solution based on an optimal common phase rotation to each slot on the DMA.
  The single-user solution was generalized to a multi-user downlink system in \cite{9991243}, uplink system in~\cite{ShlezingerEtAlDynamicMetasurfaceAntennasUplink2019} and~\cite{9847609}, and
  hybrid beamforming with  the DMA in \cite{10096260} and \cite{10476911}.
The DMA beamforming approaches  in \cite{9413746, carlson2023hierarchical, 9991243, ShlezingerEtAlDynamicMetasurfaceAntennasUplink2019,9847609, 10096260, 10476911} treat the DMA beamforming weights as frequency-flat, i.e., the phase and magnitude response as a  function of frequency is not explicitly modeled.
Hence, this prior work on DMA is not directly applicable for wideband scenarios
where DMA frequency-selectivity and beam squint is significant. 
 Although frequency-reconfigurable antennas have been proposed for wireless systems with multi-band capabilities~\cite{4754001,sakkas2023frequency}, 
there is limited prior work on  analyzing frequency-selective  beamforming with DMAs. In \cite{WangEtAlDynamicMetasurfaceAntennasMIMOOFDM2021}, a DMA-based MIMO-OFDM system was analyzed by modeling the frequency-selectivity profile of the DMAs.  Their approach, however, does not compare frequency-selective DMA beamforming with traditional TTD or phased-array architectures.
In this paper, we bridge this gap by proposing a method to configure DMAs to achieve maximum beamforming gain comparable to TTD or phased-array systems.

In our paper,  we leverage the  ability of the DMA to be dynamically reconfigured over a wide frequency range to make the DMA performance comparable to the conventional architectures. We
envision a wireless system with DMA that has the ability to be reconfigured to different subcarrier frequencies.
	In current wireless standards, there exist methods that can dynamically adapt the operating frequency for several applications. For example, in 5G NR, techniques like dynamic spectrum sharing  allows the base station to dynamically switch between different  frequency bands depending on the network load and user demand~\cite{3gpp2018nr}.  Adaptive frequency hopping is used in Bluetooth to avoid interference, with the system able to operate at different frequency bands~\cite{5282356}.
Taking motivation from the current standards, 
our proposed method for DMA beamforming  allows the system operator to dynamically choose the DMA operating frequency within a feasible band. This enables configuring the DMA resonant frequencies based on the selected operating frequency.
We propose a two-stage optimization framework where the operating frequency is also an optimization variable. The first stage of the algorithm configures the DMA resonant frequencies to maximize the beamforming gain at a specific operating frequency, expressing the optimal resonant frequencies as a closed-form parametric function of the operating frequency. The second stage of the algorithm optimizes over all operating frequencies in the feasible operating band.
This approach is suitable for systems with spectrum availability similar to the DMA operating range, allowing for the selection of operating frequencies for communication. Our novel contribution is the second stage of optimizing the operating frequency, which mitigates the impact of magnitude degradation on the DMA beamforming weight. 
With this frequency-selective strategy, the beamforming gain and hence achievable rate  improves compared to prior work \cite{smith_analysis_2017,carlson2023hierarchical,9991243,9847609}. We further show that by optimal design choice for the DMA waveguide refractive index and inter-element spacing, the maximum beamforming gain for the DMA architecture equals that of a TTD array within the desired angular range.

To achieve the desired beamforming gain at the receiver, the transmit DMA array requires a good estimate of the receiver's angular direction.
 The second contribution of our paper is a beam training procedure for DMAs to estimate the receiver's angular direction.
 We demonstrate that with a DMA array, single-shot beam training is possible by leveraging the frequency reconfiguration property. Although single-shot beam training has been studied for TTD arrays \cite{9048885, 9349090} and a leaky waveguide antenna \cite{ghasempour2020single}, it has not been analyzed for DMAs in prior work. We propose a beam training procedure that probes different directions simultaneously using different subcarrier frequencies in a sub-array approach for the DMA array.

We summarize the key novelties of our paper with respect to prior work.				
We formulate the  frequency-selective DMA beamforming gain optimization for a line-of-sight channel. 		To make the maximum achievable DMA gain the same as that of TTD, we propose selecting the optimal operating frequency based on the receiver's angular direction unlike prior work which is limited to a single frequency optimization.		
			 We propose a single-shot beam training procedure for a  DMA planar array using a sub-array approach that enables the system to probe multiple angles simultaneously with a single OFDM symbol.
			We	compare the beamforming gain for two cases: (i) perfectly known receiver's angular direction, and (ii) angular direction estimated via the proposed training. We demonstrate that the second case, using a finite set of resonant frequencies, performs comparably to the first, which requires infinite frequency resolution.
				We show that the optimal operating frequency approach outperforms the benchmarks based on the fixed center frequency approach for all DMA operating frequency ranges. We also   obtain an achievable rate of  DMA that is comparable to a TTD array up to a certain bandwidth.

			Our paper is organized  as follows.
				We study a communication system where the transmitter is equipped with a DMA and the receiver has a single isotropic antenna. We discuss the single DMA transmitter model in Section~\ref{subsec: DMA as an array of reconfigurable polarizable dipoles}, formulate the propagation channel model and the received signal model in Section~\ref{subsec: Propagation channel model}, 
							and define the signal-to-noise ratio (SNR) and achievable rate in Section~\ref{sec: snr and bf gain}.
				In Section~\ref{sec: Frequency-selective beamforming gain optimization and analysis}, we analyze the beamforming gain for the single DMA transmitter.  In Section~\ref{sec: Single-shot beam training with DMAs}, we generalize the results of Section~\ref{sec: Frequency-selective beamforming gain optimization and analysis} to multiple DMAs  arranged as a planar array, propose the single-shot beam training procedure, and a beam training codebook design. In Section~\ref{subsec: Achievable rate results}, we evaluate the achievable rate DMA performance and compare with existing approaches. The MATLAB simulation code to generate the results is made publicly available to facilitate reproducibility\footnote{{https://github.com/nvdeshpa/FreqSelBeamformingBeamtrainingDMAs}}.
				
				\textbf{Universal constants:}
				Speed of light $c=3\times 10^8 ~[\text{m}/\text{sec}]$, intrinsic impedance of free space $\eta_0\approx 377~\Omega$, Boltzmann constant $k_B= 1.38 \times 10^{-23}$ J/K.
				
					\textbf{Notation}:  A bold lowercase letter $\bh$ denotes a column vector,   $(\cdot)^T$ denotes transpose,  $|\cdot|$  indicates absolute value, $\odot$ represents the element-wise product, $\mathsf{sgn}(\cdot)$ indicates the \textit{signum} function, $\bm{1}$ be the all-ones vector.

				\section{System model }\label{sec: dma preliminaries}

				\begin{figure}
					\centering 
					\includegraphics[width=0.5\textwidth]{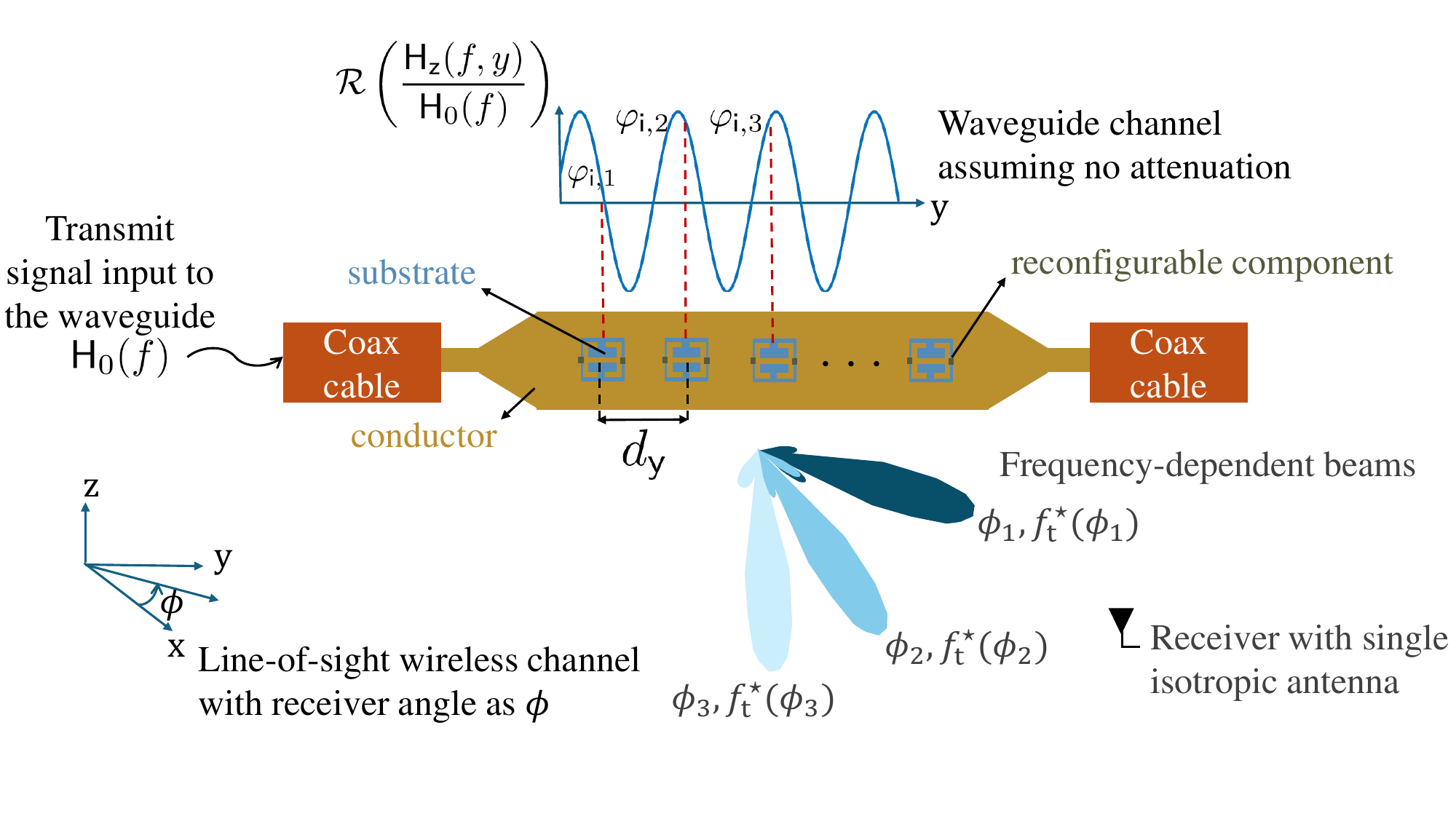}    					
					\caption{ A DMA has radiating slots  which are tuned using low-power reconfigurable components with signal incident from one end of the waveguide.
					We leverage the frequency reconfigurability property of DMA for frequency-selective beamforming. }
					\label{fig: sys model fig}
				\end{figure}
				
				\subsection{Model of a single DMA transmitter}\label{subsec: DMA as an array of reconfigurable polarizable dipoles}
				
				We describe the model of a single DMA at the transmitter.
				A DMA is a type of a  metasurface antenna that has reconfigurable radiating slots on the upper surface of a conducting waveguide.
				Let the DMA be in the $\sfy\sfz$ plane as illustrated in Fig.~\ref{fig: sys model fig}. 
					The transmit signal enters the DMA from the  waveguide input and propagates along its length. We assume the DMA has $N$ equally spaced radiating slots with inter-element spacing $d_\sfy$ along the $\sfy$ axis. 
				DMAs function as leaky-wave antennas with reconfigurable components  like varactor or PIN diodes  attached to the radiating slots to tune slot resonant frequencies.  
			 As the signal travels along the waveguide, each slot experiences a different magnitude and phase of the input signal.  The input signal radiates out of each slot, and the phase and magnitude of the transmit signal can be tuned by adjusting the slot resonant frequency.  Through this mechanism, DMAs have the capability to adjust the transmit signal without external phase-shifters or delay elements.

				The data signal is contained in the EM field entering the waveguide. Let $H_0(t)~[\text{A}/\text{m}]$ be the time domain representation of the transverse magnetic field strength at the waveguide input.  For a mathematically tractable  analysis of achievable rate, we assume that the data signal is a Gaussian wide-sense stationary random process that is completely described by its mean and second-order moment\cite{davenport1958introduction}. This signal does not have a finite energy and a windowed Fourier Transform with interval $T_0$ is used to define its spectrum\cite{russer2014modeling}.  We use  the passband frequency $f$  in Hertz for frequency-domain representation.
				Let the frequency domain magnetic field  be  $\sfH_{0}(f) = \int_{-\frac{T_0}{2}}^{\frac{T_0}{2}} H_0(t) e^{-\sfj 2\pi ft}
				\mathrm{d}t \left[\frac{\text{A}/\text{m}}{\text{Hz}}\right].$
			This frequency domain representation allows analysis of the EM fields in a communication-theoretic framework.
				
				We use the polarizable dipole framework in \cite{smith_analysis_2017}  to model the radiation characteristics of the DMAs.
				The  DMA radiating slots are much smaller compared to the free-space wavelength and can be modeled as  reconfigurable magnetic dipoles along the $\sfy$ axis.	
				Each magnetic dipole has a frequency-selective response governed by its magnetic polarizability, which connects the induced magnetic dipole moment with the magnetic field propagating inside the waveguide.
				Let $	\alpha_{\sfM,n}(f) \left[\text{m}^3\right]$ be the frequency response of the magnetic polarizability of the dipole corresponding to the  $n$th element on the DMA~\cite{bowen2022optimizing}.
				Unlike a conventional dipole array where each element is excited independently, each dipole is excited by the transverse magnetic field as it propagates along the DMA.


 We characterize the magnetic polarizability using the Lorentzian model from \cite{smith_analysis_2017}. 
 This model captures the resonant behavior of the elements on the DMA.
  Let $\sfF \left[\text{m}^3\right]$ be the coupling factor and $\Gamma \left[\text{Hz}\right]$  be the damping factor in the Lorentzian model. 
Let $f_{\sfr, n}$ be the  resonant frequency for the $n$th dipole that is dynamically reconfigurable. 
The Lorentzian  expression for the magnetic polarizability is~\cite{smith_analysis_2017}
\begin{align}\label{eqn: Lorentzian form}
	\alpha_{\sfM,n}(f)=  \frac{\sfF  2 \pi  f^2}{  2 \pi f_{\sfr, n}^2 - 2 \pi f^2 + \sfj \Gamma  f}\left[\text{m}^3\right].
\end{align}
We assume that $\sfF $ and $\Gamma$ are fixed and dependent on the DMA  unit cell design.

The polarizability can also be defined in terms of the quality factor  $	\sfQ=\frac{2\pi f}{\Gamma}.$
Let  the complex argument of the polarizability be denoted as $	\psi_{n}(f)$ and be defined as 
\begin{align}\label{eqn: argument v_l}
	\psi_{n}(f)= \mathsf{atan2}(-\Gamma  f,  2\pi (f_{\sfr, n}^2- f^2)).
\end{align}
Using 	the definition of quality factor and \eqref{eqn: argument v_l}, we express the magnetic polarizability from \eqref{eqn: Lorentzian form} as
\begin{align}\label{eqn: alpha in terms of v_lf}
	\alpha_{\sfM,n}(f)= -\sfF \sfQ \sin\left(\psi_{n}(f)\right)e^{\sfj \psi_{n}(f)} \left[\text{m}^3\right].
\end{align} 
We see that the reconfigurable resonant frequency  $f_{\sfr, n}$ modifies both the phase and magnitude of the magnetic polarizability.  The key insight from \eqref{eqn: alpha in terms of v_lf} is that the phase and magnitude of the magnetic polarizability are coupled and cannot be tuned independently.

				We model the change in the excitation signal in the DMA waveguide as it propagates by  a relative phase difference between the excitation of each polarizable dipole.
				  For a simplified analysis, 
				  we assume that there is no attenuation of the field inside the waveguide so that the
				 excitation amplitude is uniform across all DMA elements\cite{smith_analysis_2017}.
				 Including the attenuation in the model leads to a loss in the DMA beamforming gain which we discuss later in the numerical results in Section~\ref{subsec: numerical results bf gain}.
				 The relative phase of the excitation  at the $n$th dipole is $e^{\sfj \varphi_{\sfi,n}(f)}$, which we term as  the instrinsic phase-shift.
				  It is dependent on the  waveguide refractive index and inter-element spacing of the DMA radiating slots, and  cannot be dynamically adjusted.
				  	Similar to \cite{smith_analysis_2017}, we assume that the transverse magnetic field component is sinusoidal in terms of the distance traveled along the $\sfy$ axis. 
				  We denote the refractive index by $n_\sfg$, where subscript $\sfg$ is used to indicate the waveguide. The $\sfz$ component of the  waveguide magnetic field with constant magnitude $|\sfH_{0}(f)|$  is expressed as $	\sfH_{\sfz}(f,y)	=	\sfH_{0}(f) e^{-\sfj n_\sfg 2 \pi \frac{f}{c} y}    \left[\frac{\text{A}/\text{m}}{\text{Hz}}\right]$.
				  The data signal incident on each radiating slot is governed by $	\sfH_{\sfz}(f,y)$.

								The radiating slots along the $\sfy$ axis sample  $	\sfH_{\sfz}(f,y)$ at an interval of $d_\sfy$ as shown in Fig.~\ref{fig: sys model fig}.
							The induced magnetic dipole moment $	\sfm_\sfz(f,y_{n})$ $\left[\frac{\text{A}\text{m}^2}{\text{Hz}}\right]$ along the $\sfz$ axis for the $n$th radiating slot located at distance $y_{n}$ from the waveguide input is 
							\begin{align}
								\sfm_\sfz(f,y_{n})&=	\sfH_{\sfz}(f,y_{n}) \alpha_{\sfM,n}(f) = \sfH_{0}(f)  e^{-\sfj n_\sfg 2 \pi \frac{f}{c} y_{n}}  \alpha_{\sfM,n}(f). \label{eqn:  mz in terms of H0 intrinsic phase and polarizability}
							\end{align}
							Without loss of generality, let the intrinsic phase-shift  at the first element placed at the origin be $0$.  We express the intrinsic phase-shift  at the $n$th element as 
							\begin{align}\label{eqn: intrinsic phase angle}
								\varphi_{\sfi,n}(f)= - n_\sfg 2 \pi \frac{f}{c} (n-1) d_\sfy,
							\end{align}
							and define the waveguide phase-shift vector as
							$	\bh_{\mathsf{dma}}(f)=[e^{\sfj \varphi_{\sfi,1}(f)}, \dots, e^{\sfj \varphi_{\sfi,N}(f)}]^T.$
							The waveguide phase-shift vector accounts for the phase-difference between the excitation at different slots.
							
				\subsection{Received signal model}\label{subsec: Propagation channel model}

				Now, we describe the  wireless propagation channel between the DMA transmitter and the single antenna receiver.
			The  phase-shift due to wireless propagation at the $n$th slot on the transmit DMA is denoted as  $e^{\sfj \varphi_{\sfee,n}(\phi,f)}$. Assuming a line-of-sight and far-field propagation where the receiver is at an azimuth angle $\phi$ from the broadside direction ($\sfx$ axis in Fig.~\ref{fig: sys model fig}), the extrinsic phase-shift angle at the $n$th element is  ~$\varphi_{\sfee,n}(\phi,f)=- 2 \pi \frac{f}{c} ~(n-~1) ~d_\sfy \sin(\phi).$ 
				Let the far-field array response  vector  be 
				$\ba(\phi, f)=[e^{\sfj \varphi_{\sfee,1}(\phi,f)}, \dots, e^{\sfj \varphi_{\sfee,N}(\phi,f)}]^T.$
				Note that the extrinsic phase-shift angle 	$\varphi_{\sfee,n}(\phi,f)$ depends on the transmit angle of departure (AoD) $\phi$ which can vary  unlike the intrinsic phase-shift angle $\varphi_{\sfi,n}(f)$ that depends only on the fixed DMA waveguide design. 
				 
				 The effective propagation channel is expressed as
					$	\bh(\phi,f)=~\bh_{\mathsf{dma}}(f) \odot 	\ba(\phi, f).$
				The phase  of the $n$th element in $\bh(\phi,f)$ is 
				\begin{align}\label{eqn: varphi ell def}
					\varphi_{n}(\phi,f)&=\varphi_{\sfi,n}(f)+\varphi_{\sfee,n}(\phi,f)\\ &=- 2 \pi \frac{f}{c} (n-1) d_\sfy (n_\sfg +\sin(\phi)).
				\end{align}
				For a conventional array, $\varphi_{\sfi,n}(f)=0$. Hence,  $\varphi_{n}(\phi,f)=\varphi_{\sfee,n}(\phi,f)$. For a matched filter strategy,  this implies  applying an equal and opposite phase-shift $\varphi_{\sfr,n}(\phi,f)=-\varphi_{\sfee,n}(\phi,f)$ to beamform toward  $\phi$.
				For a DMA,   the transmit beamforming should account for the additional term $\varphi_{\sfi,n}(f)$ in the overall phase.

We define the received power spectral density in terms of the transmit power spectral density.
				 Assuming that mutual coupling is negligible between each radiating slot, the far-field component of the electric field at the receiver in the $\sfx\sfy$ plane at a distance $r$ from the DMA is~\cite{smith_analysis_2017}
				\begin{align}
					\sfE_{\phi}(f)&=\eta_0\left(\frac{2 \pi  f}{c}\right)^2 \sum_{n=1}^N  	\sfm_z(f,y_{n}) \frac{e^{\sfj \varphi_{\sfee,n}(\phi,f)}}{4\pi r}  \left[\frac{\text{V}/\text{m}}{\text{Hz}}\right], \nonumber\\
					&\stackrel{(a)}{=}\frac{\eta_0}{4\pi r}\left(\frac{2 \pi  f}{c}\right)^2  \sfH_{0}(f)  \sum_{n=1}^N  	  \alpha_{\sfM,n}(f) {e^{\sfj \varphi_{n}(\phi,f)}} \left[\frac{\text{V}/\text{m}}{\text{Hz}}\right]. \label{eqn: E phi in terms of polarizability}
				\end{align}
				Here, equality $(a)$ follows from \eqref{eqn:  mz in terms of H0 intrinsic phase and polarizability} and  \eqref{eqn: varphi ell def} by substituting the expression of the induced magnetic dipole moment.
				
			The receive electric field shows how the polarizability terms beamform the signal from each slot.
				To make the analysis consistent with traditional beamforming, 	we define the dimensionless DMA beamforming vector as 
				\begin{align}\label{eqn: f dma}
					\bff_{\mathsf{DMA}}(f)=\left[-\sin\left(\psi_{1}(f)\right)e^{\sfj \psi_{1}(f)}, \dots, -\sin\left(\psi_{N}(f)\right)e^{\sfj \psi_{N}(f)} \right]^T .
				\end{align}
		This yields
				\begin{align}\label{eqn: E phi f closed form}
			\sfE_{\phi}(f)	=	\frac{\eta_0}{4\pi r}\left(\frac{2 \pi  f}{c}\right)^2  \sfH_{0}(f) \sfF \sfQ   \bff^T_{\mathsf{DMA}}(f) \bh(\phi,f) \left[\frac{\text{V}/\text{m}}{\text{Hz}}\right].
				\end{align}
				Similarly, the far-field magnetic field strength component perpendicular to the $\sfx\sfy$ plane is $	\sfH_{\theta}(f)= -\frac{1}{\eta_0} \sfE_{\phi}(f) \left[\frac{\text{A}/\text{m}}{\text{Hz}}\right].$
				The radial component of the Poynting vector is 
				\begin{align}
					&\sfS_{\sfr}(f)=-\frac{1}{2}\sfE_{\phi}(f)\sfH_{\theta}^c(f) =\frac{\eta_0}{2}|\sfH_{\theta}(f)|^2 \left[\frac{\text{W}/\text{m}^2}{\text{Hz}^2}\right] ,\\
					&= \!\!\frac{1}{4\pi r^2}  \frac{\eta_0}{8 \pi}\! \left( \!\!\frac{4 \pi^2  f^2 |\sfH_{0}(f)| \sfF \sfQ \left|\bff^T_{\mathsf{DMA}}(f) \bh(\phi,f)\right| }{c^2} \!\! \right)^2   \left[\frac{\text{W}}{\text{m}^2\text{Hz}^2}\right]\!.
				\end{align}
				Let the transmit power spectral density be defined as $	P_{\mathsf{T}}(f) \!=\! \underset{T_0 \rightarrow \infty}{\mbox{lim}}\frac{1}{T_0} \frac{\eta_0}{8 \pi} \left( \frac{4 \pi^2  f^2}{c^2} \sfF \sfQ  \right)^2\!\! \bbE[|\sfH_{0}(f) |^2]  \left[\frac{\text{W}}{\text{Hz}}\right]\!$
				 and the frequency-selective beamforming gain term be defined as 
				 \begin{align}\label{eqn: G dma}
				 	\cG_{\mathsf{DMA}}(\phi, f)=  \left|\bff^T_{\mathsf{DMA}}(f) \bh(\phi,f)\right|^2 .
				 \end{align}
			The received power spectral density is  expressed in terms of the transmit power spectral density as
			\begin{align}
				&P_{\mathsf{R}}(\phi,f) =  \underset{T_0 \rightarrow \infty}{\mbox{lim}}\frac{1}{T_0} \frac{\lambda^2}{4 \pi} \bbE[\sfS_{\sfr}(f)] \\ &=\left(\frac{\lambda}{4 \pi r}\right)^2 P_{\mathsf{T}}(f) 	\cG_{\mathsf{DMA}}(\phi, f) \left[\frac{\text{W}}{\text{Hz}}\right].
			\end{align}
				The DMA configuration affects the receive signal power through the frequency-selective beamforming gain.

	In this paper, we compare the DMA-based transmitter with a TTD architecture~\cite{10051948}.
The TTD beamforming architecture for a conventional array  with $\tau_n$ delay at $n$th element produces a frequency-selective beamforming vector as 
$\bff_{\mathsf{TTD}}(f)=\left[e^{\sfj 2\pi f\tau_1}, \dots, e^{\sfj 2\pi f \tau_N} \right]^T $
and the corresponding beamforming gain is 
$	\cG_{\mathsf{TTD}}(\phi,f)=\left|\bff^T_{\mathsf{TTD}}(f) \bh(\phi,f)\right|^2$ where the intrinsic phase vector is an all ones vector.
Prior work has compared the frequency-selective beamforming gain of the TTD and phased-array but comparison with DMAs is missing~\cite{10002944}. 
We choose TTD as a benchmark for comparison.
DMA beamforming is energy-efficient but is complicated by the coupled phase and magnitude constraint.
In Section~\ref{sec: Frequency-selective beamforming gain optimization and analysis}, we discuss how to design the DMA beamformer despite these limitations.

	\subsection{Signal to noise ratio and achievable rate}\label{sec: snr and bf gain}

We measure the system performance  in terms of the achievable rate.
					Let  $k_B$ be  the Boltzmann constant in J/K and  $T$ be the noise temperature of the antenna in K. The noise spectral density at the receiver is $	 \mathsf{N}_0= k_B T \left[\frac{\text{W}}{\text{Hz}}\right]$. The received SNR for a line-of-sight receiver at angle $\phi$ operating at frequency $f$ is  defined as $	\mathsf{SNR}(\phi,f)= \frac{P_{\mathsf{R}}(\phi,f)}{\mathsf{N}_0}.$
						The received SNR varies with frequency.

					We model a communication system where the center frequency of the data transmission band can be dynamically reconfigured	as the DMA has the ability to resonate at different frequencies.
						Let	the communication bandwidth be fixed as $\sfB$.  Let the operating frequency allocated to the user by the transmitter be $f_{\sft}$   which can be dynamically chosen from the range  $[f_{\sft,\mathsf{min}}, f_{\sft,\mathsf{max}}]$.
						The DMA transmitter transfers data in the band $[f_{\sft}- \frac{\sfB}{2}, f_{\sft}+\frac{\sfB}{2}]$.
						In the data transmission band, we assume an OFDM system with $K_{\sfd}$ subcarriers. 				
						The achievable rate for a receiver with AoD as $\phi$ and  operating frequency $f_{\sft}$ is defined as
						\begin{align}
							\mathsf{R}(\phi, f_\sft)=\frac{\sfB}{K_{\sfd}}\sum_{k=1}^{K_{\sfd} }\log_2\left(1+\mathsf{SNR}(\phi,f_k)\right) \left[\text{bits/s}\right].
						\end{align}
							In this work, we focus on the problem of configuring the resonant frequencies of the transmit DMA and choosing the  operating frequency to maximize the frequency-selective beamforming gain at the receiver. Optimizing the beamforming gain will result in higher received SNR.
							Assuming equal power allocation across frequency, maximizing SNR is equivalent to maximizing achievable rate.

				\section{Frequency-selective beamforming gain optimization and analysis}\label{sec: Frequency-selective beamforming gain optimization and analysis}
				
				The main focus of this section is to find the DMA configuration that achieves the highest beamforming gain over a line-of-sight channel. The  AoD is assumed to be perfectly known in this section. We propose a method for AoD estimation in Section~\ref{sec: Single-shot beam training with DMAs}.
					We propose a two-stage optimization approach to maximize the gain at a specific  AoD.  In the first stage, we develop closed-form expressions for the DMA resonant frequencies to maximize the beamforming gain at a particular operating frequency and desired  angle.  Using the results of Section~\ref{ref: Beamforming gain optimization for a target frequency and angle }, we propose a second-stage optimization over the operating frequency in Section~\ref{subsec: Optimizing target frequency for DMA beamforming gain}. The prior work on DMA beamforming is limited only to the first stage optimization~\cite{carlson2023hierarchical, 9991243}, which we show is sub-optimal compared to our two-stage optimization approach in the proposed wideband setting.
					Although the first stage optimization has been studied, our formulation differs from the prior work as we explicitly solve for the resonant frequency.

		\subsection{Beamforming gain optimization}\label{ref: Beamforming gain optimization for a target frequency and angle }
	
	 We formulate the problem of maximizing the beamforming gain for a particular  angle $\phi_{\mathsf{AoD}}$ and operating frequency $f_{\sft}$ in terms of the reconfigurable resonant frequency of each DMA element. We assume that the resonant frequencies  can attain any positive real value $\{f_{\sfr, n} \}_{n=1}^N$.  We mathematically formulate the problem as follows:
	\begin{subequations}\label{problem1}
		\begin{alignat}{3}
			\textbf{P1}:& \underset{ \{f_{\sfr, n} \}_{n=1}^N}{\mbox{ max }} \cG_{\mathsf{DMA}}(\phi_{\mathsf{AoD}}, f_{\sft})
			, \\
			&\text{ s.t. } \{f_{\sfr, n} \}_{n=1}^N>0 .
			 \label{problem1_a}
		\end{alignat}
	\end{subequations}
To highlight why problem \textbf{P1} is challenging compared to the conventional TTD beamforming and to use some of the insights from the TTD solution to address the challenge in DMA beamforming,  we present problem \textbf{P2} for maximizing $\cG_{\mathsf{TTD}}(\phi_{\mathsf{AoD}}, f_{\sft})$.
	For a TTD array, the mathematical formulation for maximizing the beamforming gain at   angle $\phi_{\mathsf{AoD}}$ and operating frequency $f_{\sft}$ is 
				\begin{subequations}\label{problem3}
					\begin{alignat}{3}
						\textbf{P2}:& \underset{ \{ \tau_{n} \}_{n=1}^N}{\mbox{ max }} \cG_{\mathsf{TTD}}(\phi_{\mathsf{AoD}}, f_{\sft})
						, \\
						&\text{ s.t. } \{\tau_{n} \}_{n=1}^N \geq 0.
						\label{problem3_a}
					\end{alignat}
				\end{subequations}		
				The optimal solution for \textbf{P2} is  expressed in terms of  $\phi_{\mathsf{AoD}}$ as 
				\begin{align}\label{eqn: ttd closed form soln}
					\tau^{\star}_{n}= \begin{cases}
						\frac{d_\sfy}{c} (n-1)  \sin(\phi_{\mathsf{AoD}}), \quad \phi_{\mathsf{AoD}}\geq 0,\\
							\frac{d_\sfy}{c} (n -N)  \sin(\phi_{\mathsf{AoD}})				, \quad \phi_{\mathsf{AoD}}<0.		
					\end{cases}
				\end{align}
				The maximum value of the beamforming gain is $\cG^{\star}_{\mathsf{TTD}}(\phi_{\mathsf{AoD}}, f_{\sft})=N^2$.
				
				For wideband transmission, the beam squint-effect leads to reduction in beamforming gain for frequencies away from the center frequency. This degradation in phased-array gain has been studied in comparison to TTD in prior work~\cite{10002944} but comparison with  a DMA is missing. The DMA has higher frequency-selectivity compared to phased-array. 
With this solution in \eqref{eqn: ttd closed form soln}, the beam squint is eliminated completely for a line-of-sight channel, whereas for DMA this is not possible. 
				The solution for \textbf{P2} is straightforward  because the beamforming weight lies on a unit modulus circle, i.e., the magnitude for the beamforming weight is fixed. 
				  We use this observation to reformulate the objective in \textbf{P1}.

					 The optimization variable $f_{\sfr, n}$ in \textbf{P1} appears in both the phase and magnitude of the DMA beamforming weight, which makes it challenging to solve the problem directly in terms of  the resonant frequencies $f_{\sfr, n}$. In \eqref{eqn: f dma}, we observe that the magnitude of $	[\bff_{\mathsf{DMA}}(f)]_{n} $ is sinusoidal in terms of the phase angle  $	\psi_{n}(f) \in  [-\pi, 0]$. 
					 In the complex domain,  the locus of the beamforming weight $	[\bff_{\mathsf{DMA}}(f)]_{n} $ is a circle with  radius of $0.5$  centered at $\frac{-\sfj}{2}$.
					 Compared to a phase-shifter or TTD weight with $\varphi_{\sfr, n} \in [0, 2\pi)$, the DMA beamforming weight has  a limited phase-range in   $	\psi_{n}(f) \in  [-\pi, 0]$. To simplify the DMA beamforming constraint, we define a phase angle $\widetilde{\psi}_{n}(f)$ which covers the whole angular range of $2\pi$.
					 We use a shift-of-origin transformation for the DMA beamforming weight as \cite{carlson2023hierarchical}
					 \begin{align}\label{eqn: shift of origin}
					 [\bff_{\mathsf{DMA}}(f)]_{n}= \frac{-\sfj+e^{\sfj \widetilde{\psi}_{n}(f)}}{2}.
					 \end{align}
					  The relation between $\psi_{n}(f)$ and $\widetilde{\psi}_{n}(f)$ is 
					 	\begin{equation}\label{eqn: v theta in terms of v tilde }
					 		\psi_{n}(f)= \frac{\widetilde{\psi}_{n}(f)}{2} -\frac{\pi}{4},
					 	\end{equation}
					 	where $\widetilde{\psi}_{n}(f) \in \left[\frac{-3\pi}{2}, \frac{\pi}{2}\right]$.
				In problem $\textbf{P1}$, the objective is to maximize the beamforming gain at $f_{\sft}$ by optimizing the resonant frequencies $\{f_{\sfr, n} \}_{n=1}^N$.  Substituting $f=f_{\sft}$ in \eqref{eqn: argument v_l} and \eqref{eqn: v theta in terms of v tilde }, we express the resonant frequency of the $n$th slot at operating frequency $f_{\sft}$ as
				\begin{align}\label{eqn: frl in terms of v tilde}
					f_{\sfr, n}=\sqrt{f_{\sft}^2+\frac{\Gamma f_{\sft}}{2\pi }\tan\left( \frac{\pi}{4} +\frac{\widetilde{\psi}_{n}(f_{\sft})}{2}\right)}.
				\end{align}
				The shift-of-origin transformation simplifies the beamforming gain optimization in $\textbf{P1}$  by
			 reformulating the problem in terms of $\widetilde{\psi}_{n}(f)$ and then computing the optimal $f_{\sfr, n}$ using 
				\eqref{eqn: frl in terms of v tilde}.
				
				A shift-of-origin transformation can also be applied to the DMA beamformer to
			 define the unit-modulus DMA beamforming vector  $\widetilde{\bff}_{\mathsf{DMA}}(f)$ such that $[\widetilde{\bff}_{\mathsf{DMA}}(f)]_{n}= e^{\sfj \widetilde{\psi}_{n}(f)}$. The original DMA beamforming vector is expressed in terms of $\widetilde{\bff}_{\mathsf{DMA}}(f)$  as $\bff_{\mathsf{DMA}}(f)= \frac{-\sfj \bm{1}+ \widetilde{\bff}_{\mathsf{DMA}}(f)}{2}.$
			The DMA beamforming gain in \eqref{eqn: G dma} is
				$\cG_{\mathsf{DMA}}(\phi_{\mathsf{AoD}}, f)=	
				\frac{1 }{4} \left|-\sfj \bm{1}^T \bh(\phi_{\mathsf{AoD}},f) + \widetilde{\bff}^T_{\mathsf{DMA}}(f) \bh(\phi_{\mathsf{AoD}},f)\right|^2 $.
We reformulate  \textbf{P1} in terms of the unit-modulus beamforming weight as 

\begin{subequations}\label{problem1a}
	\begin{alignat}{3}
		\textbf{P1a}:& \underset{ \{\widetilde{\psi}_{n}(f_{\sft}) \}_{n=1}^N}{\mbox{ max }} \cG_{\mathsf{DMA}}(\phi_{\mathsf{AoD}}, f_{\sft})
		, \label{problem1a_a}\\
		&\text{ s.t. }\{\widetilde{\psi}_{n}(f_{\sft}) \}_{n=1}^N \in  \left[\frac{-3\pi}{2}, \frac{\pi}{2}\right].
		\label{problem1a_b}
	\end{alignat}
\end{subequations}
The closed-form expression for the optimal $\widetilde{\psi}_{n}(f_{\sft})$ and the expression for the maximum value of beamforming gain at $\phi_{\mathsf{AoD}}$ and $f_{\sft}$ are given in the following lemma. 

\begin{lemma}\label{thm:  optimal soln P1a} Let $	\cS(\phi, f_{\sft})= \frac{\sin\left( \pi N\frac{ f_{\sft}d_\sfy (n_\sfg +\sin(\phi))}{c} \right)}{\sin\left( \pi \frac{ f_{\sft}d_\sfy (n_\sfg +\sin(\phi))}{c} \right)}$.
		The optimal solution to problem \textbf{P1a}  $\{\widetilde{\psi}^{\star}_{n}(f_{\sft}) \}_{n=1}^N$ takes the form
	\begin{align}\label{eqn: vell opt}
		&\widetilde{\psi}^{\star}_{n}(f_{\sft}) =-\frac{\pi}{2}\mathsf{sgn}(\cS(\phi_{\mathsf{AoD}}, f_{\sft})) \nonumber \\&+ \frac{2\pi f_{\sft}d_\sfy (n_\sfg +\sin(\phi_{\mathsf{AoD}}))}{c} \left(n-1- \frac{N-1}{2}\right) +2m\pi,
	\end{align}
	where $m$ is an integer such that $\widetilde{\psi}^{\star}_{n}(f_{\sft}) \in \left[\frac{-3\pi}{2}, \frac{\pi}{2}\right] $.  The maximum value of the beamforming gain at $\phi_{\mathsf{AoD}}$ and $f_{\sft}$ is 
	\begin{align}\label{eqn: Gstar DMA phi t ft}
		\cG^{\star}_{\mathsf{DMA}}(\phi_{\mathsf{AoD}}, f_{\sft})= \frac{1}{4}\left(N+ |\cS(\phi_{\mathsf{AoD}}, f_{\sft})|\right)^2.
	\end{align}
\end{lemma}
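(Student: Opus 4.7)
The plan is to exploit the unit-modulus reformulation already done in the excerpt, writing
\begin{equation*}
\cG_{\mathsf{DMA}}(\phi_{\mathsf{AoD}}, f_{\sft}) = \tfrac{1}{4}\left| A(\phi_{\mathsf{AoD}}, f_{\sft}) + B\right|^{2},
\end{equation*}
where $A(\phi, f) := -\sfj\,\bm{1}^{T}\bh(\phi, f)$ is a constant offset that does not depend on the decision variables, and $B := \widetilde{\bff}_{\mathsf{DMA}}^{T}(f_{\sft})\,\bh(\phi_{\mathsf{AoD}}, f_{\sft}) = \sum_{n=1}^{N} e^{\sfj (\widetilde{\psi}_{n}(f_{\sft}) + \varphi_{n}(\phi_{\mathsf{AoD}}, f_{\sft}))}$ is a sum of $N$ unit-modulus phasors. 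By the triangle inequality $|A + B| \le |A| + |B| \le |A| + N$, with equality iff (i) all summands in $B$ share a common phase (so $|B| = N$), and (ii) that common phase equals $\arg A$. These two conditions will determine $\widetilde{\psi}_{n}^{\star}(f_{\sft})$ uniquely modulo $2\pi$ and will deliver the claimed maximum $\tfrac{1}{4}(N+|\cS(\phi_{\mathsf{AoD}}, f_{\sft})|)^{2}$.

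First I would evaluate $A$ in closed form. Setting $\alpha := \tfrac{2\pi f_{\sft} d_\sfy (n_\sfg + \sin\phi)}{c}$ so that $\varphi_{n}(\phi, f_{\sft}) = -(n-1)\alpha$, a geometric series gives
\begin{equation*}
\bm{1}^{T}\bh(\phi, f_{\sft}) = \sum_{n=1}^{N} e^{-\sfj (n-1)\alpha} = e^{-\sfj (N-1)\alpha/2}\,\frac{\sin(N\alpha/2)}{\sin(\alpha/2)} = e^{-\sfj (N-1)\alpha/2}\,\cS(\phi, f_{\sft}).
\end{equation*}
Multiplying by $-\sfj$, one reads off $|A| = |\cS(\phi, f_{\sft})|$ and
\begin{equation*}
\arg A = -\tfrac{\pi}{2}\,\mathsf{sgn}\bigl(\cS(\phi, f_{\sft})\bigr) - \tfrac{(N-1)\alpha}{2} \pmod{2\pi},
\end{equation*}
where the $\mathsf{sgn}(\cS)$ factor absorbs the $\pi$ rotation that appears whenever $\cS < 0$. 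Then the alignment condition for $B$ is $\widetilde{\psi}_{n}(f_{\sft}) + \varphi_{n}(\phi_{\mathsf{AoD}}, f_{\sft}) = \arg A$ for every $n$, i.e.\ $\widetilde{\psi}_{n}(f_{\sft}) = \arg A + (n-1)\alpha$. Substituting $\arg A$ and collecting terms yields
\begin{equation*}
\widetilde{\psi}_{n}^{\star}(f_{\sft}) = -\tfrac{\pi}{2}\,\mathsf{sgn}\bigl(\cS(\phi_{\mathsf{AoD}}, f_{\sft})\bigr) + \alpha\left(n - 1 - \tfrac{N-1}{2}\right) + 2m\pi,
\end{equation*}
which is exactly \eqref{eqn: vell opt}. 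The integer $m$ is free and can always be chosen to enforce feasibility \eqref{problem1a_b} because the admissible interval $[-3\pi/2, \pi/2]$ has length $2\pi$. Plugging back gives $\cG_{\mathsf{DMA}}^{\star} = \tfrac{1}{4}(N + |\cS|)^{2}$, as claimed.

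The main obstacle is the phase bookkeeping around the sign of $\cS$: when $\cS < 0$, the product $e^{-\sfj(N-1)\alpha/2}\,\cS$ carries an additional $\pi$ rotation that must be folded into $\arg A$ so that a single $-\tfrac{\pi}{2}\,\mathsf{sgn}(\cS)$ term correctly represents both signs in the final formula. A smaller verification point is confirming that the expression for $\widetilde{\psi}_{n}^{\star}(f_{\sft})$ lies in the Lorentzian-induced range $[-3\pi/2, \pi/2]$ via a suitable choice of $m$, and that the corresponding $f_{\sfr, n}^{\star}$ recovered from \eqref{eqn: frl in terms of v tilde} is real and positive; both are immediate once the $2\pi$-periodicity is handled, but they are the non-routine checks in the argument.
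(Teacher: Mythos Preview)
Your proposal is correct and follows essentially the same approach as the paper: both decompose $\cG_{\mathsf{DMA}}=\tfrac{1}{4}|A+B|^2$ with $A=-\sfj\,\bm{1}^T\bh$ and $B=\widetilde{\bff}_{\mathsf{DMA}}^T\bh$, apply the triangle inequality, and achieve equality by aligning the phase of every summand in $B$ with $\arg A$. Your write-up is in fact more explicit than the paper's (which simply states the bound and says ``equating the phase of the terms\ldots we obtain equality''), carrying out the geometric-series evaluation of $A$, the $\mathsf{sgn}(\cS)$ bookkeeping, and the feasibility check for $m$ that the paper leaves implicit.
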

\begin{proof}
We bound the		beamforming gain from \eqref{problem1a_a} as
\begin{align}
	&\cG_{\mathsf{DMA}}(\phi, f_{\sft})
	\stackrel{(a)}{\leq}  0.25 \bigg(\bigg|-\sfj \bm{1}^T \bh(\phi, f_{\sft}) \bigg|^2 + \bigg|\widetilde{\bff}^T_{\mathsf{DMA}}(f_{\sft}) \bh(\phi, f_{\sft})\bigg|^2 \nonumber\\  & + 2  \bigg|-\sfj \bm{1}^T \bh(\phi, f_{\sft}) \bigg| \bigg|\widetilde{\bff}^T_{\mathsf{DMA}}(f_{\sft}) \bh(\phi, f_{\sft})\bigg| \bigg).
\end{align}
We simplify the term $-\sfj \bm{1}^T \bh(\phi, f_{\sft})= \cS(\phi, f_{\sft})  e^{\sfj \left( -\frac{\pi}{2} -\pi (N-1)\frac{ f_{\sft}d_\sfy (n_\sfg +\sin(\phi))}{c} \right)}$.
Equating the phase of the terms  	$-\sfj \bm{1}^T \bh(\phi, f_{\sft})$ and $\widetilde{\bff}^T_{\mathsf{DMA}}(f_{\sft}) \bh(\phi, f_{\sft})$, we obtain
equality in $(a)$ which proves the desired result. 
\end{proof}

From the  expression in \eqref{eqn: vell opt}, we observe that the optimal phase can be decomposed into two terms \textemdash~ a  term dependent on the slot index $n$ and a term independent of  $n$. The term independent of $n$ is the constant phase rotation applied to each element to maximize the gain. 
The beamforming gain is not invariant to the common phase rotation because of the Lorentzian constraint.
 Our proposed solution resembles the optimal phase rotation approach discussed in the prior work\cite{carlson2023hierarchical}.  We  propose a closed-form solution that is applicable for a line-of-sight channel and for the ideal case with no waveguide attenuation unlike \cite{carlson2023hierarchical} which used a brute-force search approach to compute the optimal phase rotation in a setting with a non-ideal DMA.

 The optimal resonant frequencies are computed by substituting the closed-form expression for $\widetilde{\psi}^{\star}_{n}(f_{\sft}) $ from \eqref{eqn: vell opt} in \eqref{eqn: frl in terms of v tilde}.
 Let	the optimal solution to problem \textbf{P1} be  $\{f_{\sfr, n}^{\star} \}_{n=1}^N$.
 We express it in closed-form as 
\begin{align}\label{eqn: f star l closed form}
	f_{\sfr, n}^{\star}&= \bigg(   f_{\sft}^2+\frac{\Gamma f_{\sft}}{2\pi }\tan\bigg( \frac{\pi}{4}(1-\mathsf{sgn}(\cS(\phi_{\mathsf{AoD}}, f_{\sft})))  \nonumber \\ &+\frac{\pi f_{\sft}d_\sfy (n_\sfg +\sin(\phi_{\mathsf{AoD}}))}{c} \bigg(n-1- \frac{N-1}{2}\bigg)\bigg)
	\bigg)^{\frac{1}{2}}.
\end{align}
 	We now have a mapping between the DMA resonant frequencies to the receiver  angle as a parametric function of the operating frequency.  For a fixed $f_{\sft}$, \eqref{eqn: f star l closed form} is the optimal solution to configure the DMA.  In Section~\ref{subsec: Optimizing target frequency for DMA beamforming gain}, we go one step further and optimize the communication system with a flexible choice of  operating frequency that can leverage the DMA frequency reconfigurability to further improve the beamforming gain.

\subsection{Optimizing operating frequency and  angular coverage}\label{subsec: Optimizing target frequency for DMA beamforming gain}

The maximum value of $	\cG^{\star}_{\mathsf{DMA}}(\phi_{\mathsf{AoD}}, f_{\sft})$ varies with $\phi_{\mathsf{AoD}}$ and $ f_{\sft}$ as proved in Lemma~\ref{thm:  optimal soln P1a}.  For a TTD array, the maximum beamforming gain is independent of $\phi_{\mathsf{AoD}}$ and $ f_{\sft}$ and equals $N^2$. This begs the question whether it is possible to attain the maximum value of $N^2$ for  a  DMA too?
For a DMA operating in a  system with wider spectrum availability,  we formulate a problem of optimally choosing the operating frequency to maximize the beamforming gain at a particular  angle.  We assume that the 
operating frequency range of the DMA is $[f_{\sft,\mathsf{min}}, f_{\sft, \mathsf{max}}]$.
Mathematically,  \textbf{P1} can be modified to include the operating frequency optimization as follows:
\begin{subequations}\label{problem4}
	\begin{alignat}{3}
		\textbf{P3}:& \underset{ f_{\sft}}{\mbox{ max }} \left(\underset{ \{f_{\sfr, n} \}_{n=1}^N}{\mbox{ max }} \cG_{\mathsf{DMA}}(\phi_{\mathsf{AoD}}, f_{\sft})\right)
		, \\
		&\text{ s.t. } \{f_{\sfr, n} \}_{n=1}^N>0 ,
		\label{problem4_a}\\
		& f_{\sft} \in [f_{\sft,\mathsf{min}}, f_{\sft,\mathsf{max}}] \label{eq: ft range}.
	\end{alignat}
\end{subequations}
In the following lemma, we  prove that it is possible to attain the same maximum value as that of a TTD array by optimizing over the operating frequency.

\begin{lemma}\label{thm:  optimal target freq}
	 Let $p_{\mathsf{min}}= \frac{ f_{\sft,\mathsf{min}}d_\sfy (n_\sfg +\sin(\phi_{\mathsf{AoD}}))}{c}$ and $p_{\mathsf{max}}= \frac{ f_{\sft,\mathsf{max}}d_\sfy (n_\sfg +\sin(\phi_{\mathsf{AoD}}))}{c}$.
	The optimal solution to problem  	\textbf{P3}  $f^{\star}_{\sft} $  is expressed in closed-form  as
\begin{align}\label{eqn: f_t given phi_t}
	f^{\star}_{\sft}(\phi_{\mathsf{AoD}})= \frac{p^{\star} c}{d_\sfy(\sin(\phi_{\mathsf{AoD}})+n_\sfg)},
\end{align}
where $p^{\star}\in [p_{\mathsf{min}}, p_{\mathsf{max}}]$ is the solution to the problem
\begin{align}\label{eqn: p star numerical soln}
	p^{\star}= \underset{p \in [p_{\mathsf{min}}, p_{\mathsf{max}}]}{\mathsf{argmax}}\left|\frac{\sin(\pi N p)}{\sin(\pi p)}\right|.
\end{align}
The  beamforming gain at $\phi_{\mathsf{AoD}}$ is $\cG^{\star}_{\mathsf{DMA}}(\phi_{\mathsf{AoD}}, f^{\star}_{\sft}(\phi_{\mathsf{AoD}})) \leq N^2$, where equality holds when 
$\exists~p^{\star}\in \bbN$. At this maximum value, the  optimal resonant frequencies of the DMA elements are $f_{\sfr, n}^{\star}(\phi_{\mathsf{AoD}})=f^{\star}_{\sft}(\phi_{\mathsf{AoD}})~\forall~n $. For the case where $\nexists  ~p^{\star}\in \bbN$, the optimal resonant frequencies are obtained by solving \eqref{eqn: p star numerical soln} numerically and substituting $f^{\star}_{\sft}(\phi_{\mathsf{AoD}})$ in \eqref{eqn: f star l closed form}.
\end{lemma}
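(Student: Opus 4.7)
The plan is to exploit Lemma~\ref{thm:  optimal soln P1a}, which gives the closed-form maximum beamforming gain at a fixed operating frequency as $\cG^{\star}_{\mathsf{DMA}}(\phi_{\mathsf{AoD}}, f_{\sft}) = \tfrac{1}{4}(N + |\cS(\phi_{\mathsf{AoD}}, f_{\sft})|)^2$. Since $N$ is constant, the outer maximization in \textbf{P3} reduces to maximizing $|\cS(\phi_{\mathsf{AoD}}, f_{\sft})|$ over $f_{\sft} \in [f_{\sft,\mathsf{min}}, f_{\sft,\mathsf{max}}]$. I would then introduce the change of variable $p = f_{\sft} d_\sfy(n_\sfg + \sin(\phi_{\mathsf{AoD}}))/c$, which is an affine bijection from $[f_{\sft,\mathsf{min}}, f_{\sft,\mathsf{max}}]$ onto $[p_{\mathsf{min}}, p_{\mathsf{max}}]$, reducing the problem to \eqref{eqn: p star numerical soln}. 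Inverting the change of variable yields \eqref{eqn: f_t given phi_t}.

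The next step is to establish the upper bound $|\cS|\le N$ and characterize equality. I would write the Dirichlet-kernel identity
\begin{equation}
\frac{\sin(\pi N p)}{\sin(\pi p)} = e^{\sfj \pi p (N-1)} \sum_{k=0}^{N-1} e^{-\sfj 2\pi p k},
\end{equation}
from which the triangle inequality gives $|\cS| \le N$, with equality iff all the unit-modulus terms $e^{-\sfj 2\pi p k}$ coincide, i.e., iff $p \in \bbN$. Plugging back into Lemma~\ref{thm:  optimal soln P1a} yields $\cG^{\star}_{\mathsf{DMA}}(\phi_{\mathsf{AoD}}, f^{\star}_{\sft}(\phi_{\mathsf{AoD}})) \le N^2$ with equality iff $p^\star \in \bbN$.

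For the resonant-frequency claim in the integer case, I would observe that when $p^\star \in \bbN$, the channel vector $\bh(\phi_{\mathsf{AoD}}, f_{\sft}^\star)$ has entries $e^{-\sfj 2\pi p^\star (n-1)} = 1$, i.e., $\bh = \bm{1}$. From \eqref{eqn: vell opt}, the term $2\pi p^\star (n-1-\tfrac{N-1}{2})$ becomes a constant modulo $2\pi$ (independent of $n$), so the optimal $\widetilde{\psi}_n^\star(f_{\sft}^\star)$ is the same for every slot. Using the shift-of-origin relation $\psi_n = \widetilde{\psi}_n/2 - \pi/4$ and the fact that the maximum of $|\bff_{\mathsf{DMA}}^T \bm{1}|^2 = N^2|\sin\psi_1|^2$ is attained at $\psi_n = -\pi/2$, the defining equation \eqref{eqn: argument v_l} forces $f_{r,n}^\star = f_{\sft}^\star$ for all $n$. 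For $p^\star \notin \bbN$, the optimal resonant frequencies simply follow by substituting $f_{\sft}=f_{\sft}^\star(\phi_{\mathsf{AoD}})$ (obtained numerically via \eqref{eqn: p star numerical soln}) into the parametric formula \eqref{eqn: f star l closed form} already derived in Section~\ref{ref: Beamforming gain optimization for a target frequency and angle }.

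The main obstacle is the absence of a clean closed form for $p^\star$ in the general case: the function $p \mapsto |\sin(\pi N p)/\sin(\pi p)|$ is non-concave with many local maxima near each integer, so only the integer case admits an analytic answer and the rest must be handled numerically within $[p_{\mathsf{min}}, p_{\mathsf{max}}]$. The rest of the argument is algebraic manipulation of expressions already available from Lemma~\ref{thm:  optimal soln P1a}.
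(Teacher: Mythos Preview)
Your proposal is correct and follows essentially the same route as the paper: reduce the outer maximization in \textbf{P3} to maximizing $|\cS(\phi_{\mathsf{AoD}},f_{\sft})|$ via Lemma~\ref{thm:  optimal soln P1a}, change variables to $p$, and split into the integer and non-integer cases. Your argument is in fact more explicit than the paper's in two places: you prove the bound $|\cS|\le N$ via the Dirichlet-kernel identity and the triangle inequality (the paper simply asserts the global maximum is $N$), and you justify $f_{\sfr,n}^{\star}=f_{\sft}^{\star}$ by directly analyzing the beamformer against $\bh=\bm{1}$ rather than merely citing \eqref{eqn: f star l closed form} and saying ``it can be easily verified.''
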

\begin{proof}
		The inner maximization in problem \textbf{P3} is solved in Lemma~\ref{thm:  optimal soln P1a} and the closed-form resonant frequency solution is expressed in \eqref{eqn: f star l closed form}. We leverage these solutions to solve  problem \textbf{P3}. The optimal operating frequency is expressed as 
	\begin{align}
		f^{\star}_{\sft}(\phi)&= \underset{f_{\sft}  \in [f_{\sft,\mathsf{min}}, f_{\sft,\mathsf{max}}]}{\mathsf{argmax}}\cG^{\star}_{\mathsf{DMA}}(\phi, f_{\sft})= \underset{f_{\sft}  \in [f_{\sft,\mathsf{min}}, f_{\sft,\mathsf{max}}]}{\mathsf{argmax}} |	\cS(\phi, f_{\sft})|.
	\end{align}
	There are two cases for expressing the solution of \textbf{P3} depending on whether  $p^{\star}$ is a natural number or not.
	
	\textbf{Case 1:}	The global maximum of $|	\cS(\phi, f_{\sft})|$ is $N$ and is attained when $	p^{\star}= \frac{ f^{\star}_{\sft}d_\sfy (n_\sfg +\sin(\phi))}{c}$ is a natural number.
	Using \eqref{eqn: f star l closed form}, it can be easily verified that $	f_{\sfr, n}^{\star}(\phi)=f^{\star}_{\sft}(\phi)$ when $	p^{\star}$ is a natural number.
	
	\textbf{Case 2:}  The global maximum $N$ cannot be attained if there is no integer $p^{\star}\in [p_{\mathsf{min}}, p_{\mathsf{max}}]$. Hence,  $p^{\star}$ is obtained numerically by solving \eqref{eqn: p star numerical soln}.
\end{proof}

From Lemma~\ref{thm:  optimal target freq}, we see that the maximum possible DMA beamforming  gain is $N^2$ only when $p^{\star}$ is a natural number.  This  maximum gain  equals that of a TTD array but only at the optimal frequency. Lemma~\ref{thm:  optimal target freq} also demonstrates that all slots should have the same optimal resonant frequency for the DMA to attain the maximum gain.
The gain is less than $N^2$  when there does not exist any natural number in the range $[p_{\mathsf{min}}, p_{\mathsf{max}}]$. In this case, the optimal resonant frequencies are distinct for each DMA slot.

The DMA angular range for beamforming is limited by the constraint on the  operating frequency in \eqref{eq: ft range}.
For the unconstrained DMA case, i.e., without constraint \eqref{eq: ft range}, the maximum beamforming gain is $N^2$ for all AoDs. 
  Imposing  constraint \eqref{eq: ft range} limits the angular range for which the maximum gain  $N^2$ is attainable. 
The limits $p_{\mathsf{min}}$ and $p_{\mathsf{max}}$ play a critical role in defining the regime in which the maximum gain of  $N^2$ can be attained. These limits depend on the DMA parameters like the refractive index $n_\sfg$ and the inter-antenna spacing $d_\sfy$.  This dependence motivates us to design  $n_\sfg$ and $d_\sfy$ to improve the beamforming coverage.

We first define the maximum gain angular range of the DMA.
The maximum gain angular range is the set of angles over which the DMA beamforming gain is $N^2$ for some frequency $f^{\star}_{\sft}(\phi) \in  [f_{\sft,\mathsf{min}}, f_{\sft,\mathsf{max}}]$.
Let this set be denoted as $\mathbf{\Phi}$.
Note that the frequency at which the maximum gain is achieved is dependent on the angle as proved in Lemma~\ref{thm:  optimal target freq}.
We define the beamforming criterion as 
\begin{align}\label{eqn: bf criterion}
\cG^{\star}_{\mathsf{DMA}}(\phi, f^{\star}_{\sft}(\phi))=N^2~\forall~\phi \in \mathbf{\Phi},  ~f^{\star}_{\sft}(\phi) \in  [f_{\sft,\mathsf{min}}, f_{\sft,\mathsf{max}}] .
\end{align}
 In other words, 
 for each  AoD  $ \phi \in \mathbf{\Phi}$, we want to obtain the optimal operating frequency solution corresponding to the case when $p^{\star}$ is a natural number from Lemma~\ref{thm:  optimal target freq}.
We provide the closed-form design solution for $n_\sfg^{\star}$ and $d_\sfy^{\star}$ to satisfy the beamforming criterion from \eqref{eqn: bf criterion} in the following lemma. 

\begin{lemma}\label{thm:  ng and dy design}
	For the set $\mathbf{\Phi}= \{ \phi_{\mathsf{AoD}}: -\pi< \phi_{\sfl\sfw} \leq \phi_{\mathsf{AoD}} \leq \phi_{\sfu\sfp} < \pi\}$,
	 the closed-form design solution for $n_\sfg^{\star}$ that satisfies \eqref{eqn: bf criterion} is 
	\begin{align}\label{eqn: optimal refractive index}
n_\sfg^{\star}\!=\!\frac{\sin(\phi_{\sfu\sfp})\!-\!\sin(\phi_{\sfl\sfw})}{2}\!\! \left(\!\frac{f_{\sft,\mathsf{max}}\!+\!f_{\sft,\mathsf{min}}}{f_{\sft,\mathsf{max}}\!-\!f_{\sft,\mathsf{min}}}\!\right) \!-\! \frac{\sin(\phi_{\sfu\sfp})\!+\!\sin(\phi_{\sfl\sfw})}{2}\!,
	\end{align}
	and the closed-form design solution for $d_\sfy^{\star}$ is 
	\begin{align}\label{eqn: opt dy}
		d_\sfy^{\star}
		= 
		\frac{c p^{\star} (f_{\sft,\mathsf{max}}- f_{\sft,\mathsf{min}})}{(\sin(\phi_{\sfu\sfp})-\sin(\phi_{\sfl\sfw})) f_{\sft,\mathsf{min}}f_{\sft,\mathsf{max}}}, \text{where }  p^{\star} \in \bbN.
	\end{align}
\end{lemma}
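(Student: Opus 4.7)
The plan is to start from Lemma~\ref{thm:  optimal target freq}, which tells us that the criterion $\cG^{\star}_{\mathsf{DMA}}(\phi,f^{\star}_{\sft}(\phi))=N^2$ is equivalent to the existence, for each $\phi\in\mathbf{\Phi}$, of a natural number $p^{\star}(\phi)$ with
\begin{equation*}
p^{\star}(\phi)=\frac{f^{\star}_{\sft}(\phi)\,d_\sfy\,(n_\sfg+\sin(\phi))}{c},\qquad f^{\star}_{\sft}(\phi)\in[f_{\sft,\mathsf{min}},f_{\sft,\mathsf{max}}].
\end{equation*}
Since the lemma asks for a single closed-form $(n_\sfg^{\star},d_\sfy^{\star})$ that works for the whole interval $[\phi_{\sfl\sfw},\phi_{\sfu\sfp}]$, I will look for a design in which a \emph{single} natural number $p^{\star}$ works uniformly over $\mathbf{\Phi}$ (any variation of $p^{\star}(\phi)$ would force jumps in the operating frequency, incompatible with a smooth $f^{\star}_{\sft}(\phi)$). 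Rearranging, this gives the continuous mapping
\begin{equation*}
f^{\star}_{\sft}(\phi)=\frac{c\,p^{\star}}{d_\sfy\,(n_\sfg+\sin(\phi))}.
\end{equation*}

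Next I would exploit monotonicity: under the physically meaningful assumption $n_\sfg+\sin(\phi)>0$, $f^{\star}_{\sft}(\phi)$ is a strictly decreasing function of $\sin(\phi)$. Consequently, $f^{\star}_{\sft}(\phi)\in[f_{\sft,\mathsf{min}},f_{\sft,\mathsf{max}}]$ for every $\phi\in[\phi_{\sfl\sfw},\phi_{\sfu\sfp}]$ if and only if the two boundary conditions
\begin{equation*}
\frac{c\,p^{\star}}{d_\sfy\,(n_\sfg+\sin(\phi_{\sfu\sfp}))}=f_{\sft,\mathsf{min}},\qquad \frac{c\,p^{\star}}{d_\sfy\,(n_\sfg+\sin(\phi_{\sfl\sfw}))}=f_{\sft,\mathsf{max}}
\end{equation*}
are both satisfied. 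Using the whole frequency range tightly is what produces the unique closed-form design; otherwise there would be a free parameter.

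I would then treat these two relations as a $2\times 2$ system in the unknowns $n_\sfg$ and $d_\sfy$. Taking the ratio of the two equations eliminates $d_\sfy$ and $p^{\star}$ simultaneously, yielding $(n_\sfg+\sin(\phi_{\sfl\sfw}))f_{\sft,\mathsf{max}}=(n_\sfg+\sin(\phi_{\sfu\sfp}))f_{\sft,\mathsf{min}}$. Solving for $n_\sfg$ and algebraically re-expressing the numerator in the suggestive ``sum/difference'' form of \eqref{eqn: optimal refractive index} gives $n_\sfg^{\star}$. Substituting $n_\sfg^{\star}+\sin(\phi_{\sfu\sfp})$ back into the first boundary equation gives $d_\sfy^{\star}$ in \eqref{eqn: opt dy}, with $p^{\star}$ left as a free natural-number design parameter (existence being trivial, since any $p^{\star}\in\bbN$ yields an admissible $d_\sfy^{\star}>0$).

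The main obstacle, more conceptual than technical, is justifying the reduction to a single $p^{\star}$ and the tight matching of the endpoints; once that is granted, the rest is an algebraic verification. A secondary subtlety is confirming the sign/positivity condition $n_\sfg^{\star}+\sin(\phi)>0$ throughout $\mathbf{\Phi}$ so that the monotonicity argument and the division steps are valid — this can be checked directly from \eqref{eqn: optimal refractive index} since $f_{\sft,\mathsf{max}}+f_{\sft,\mathsf{min}}>f_{\sft,\mathsf{max}}-f_{\sft,\mathsf{min}}$ ensures $n_\sfg^{\star}$ is sufficiently large whenever $\phi_{\sfu\sfp}>\phi_{\sfl\sfw}$.
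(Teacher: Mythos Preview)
Your proposal is correct and follows essentially the same route as the paper: both invoke Lemma~\ref{thm:  optimal target freq} to reduce the criterion to the existence of a single natural number $p^{\star}$ in the interval $\big[\tfrac{f_{\sft,\mathsf{min}}d_\sfy(n_\sfg+\sin\phi)}{c},\tfrac{f_{\sft,\mathsf{max}}d_\sfy(n_\sfg+\sin\phi)}{c}\big]$ for every $\phi\in[\phi_{\sfl\sfw},\phi_{\sfu\sfp}]$, use monotonicity in $\sin\phi$ to collapse this to the two endpoint constraints, and then equate the two bounds to pin down $(n_\sfg^{\star},d_\sfy^{\star})$. Your framing via the boundary values of $f^{\star}_{\sft}(\phi)$ is just the same pair of equations rearranged, and your added remarks on the single-$p^{\star}$ reduction and the positivity of $n_\sfg^{\star}+\sin\phi$ only make explicit what the paper leaves implicit.
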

\begin{proof}
	From \textbf{Case 1} of  Lemma~\ref{thm:  optimal target freq}, we see that to satisfy the  requirement $\cG^{\star}_{\mathsf{DMA}}(\phi, f^{\star}_{\sft})=N^2  $,  the value of $p^{\star}$ has to be a natural number which lies inside $\left[ \frac{ f_{\sft,\mathsf{min}}d_\sfy (n_\sfg +\sin(\phi))}{c},  \frac{ f_{\sft,\mathsf{max}}d_\sfy (n_\sfg +\sin(\phi))}{c}\right]$. To satisfy $\cG^{\star}_{\mathsf{DMA}}(\phi, f^{\star}_{\sft})=N^2~\forall~\phi \in [\phi_{\sfl\sfw}, \phi_{\sfu\sfp}] $, the upper limit computed at $\phi \in [\phi_{\sfl\sfw}, \phi_{\sfu\sfp}] $ should be greater than or equal to a natural number $p^{\star}$ and the lower limit computed at $\phi \in [\phi_{\sfl\sfw}, \phi_{\sfu\sfp}]$ should be less than or equal to the same  natural number $p^{\star}$.
	Mathematically, $\forall \phi \in [\phi_{\sfl\sfw}, \phi_{\sfu\sfp}]$, we get
	\begin{align}
		\frac{ f_{\sft,\mathsf{min}}d_\sfy}{c} (n_\sfg +\sin(\phi)) \leq p^{\star} \leq \frac{ f_{\sft,\mathsf{max}}d_\sfy}{c} (n_\sfg +\sin(\phi)).
	\end{align}
	Both the upper and lower limit on $p^{\star}$ are an increasing function of $\phi$ in the desired angular range.  Hence, we have
	\begin{align}\label{eqn: upper and lower on p star}
		\frac{ f_{\sft,\mathsf{min}}d_\sfy (n_\sfg +\sin(\phi_{\sfu\sfp}))}{c} \leq  p^{\star} \leq  \frac{ f_{\sft,\mathsf{max}}d_\sfy (n_\sfg +\sin(\phi_{\sfl\sfw}))}{c}.
	\end{align}
	By equating the upper and lower limit of $p^{\star}$ in \eqref{eqn: upper and lower on p star}, we get the desired results in \eqref{eqn: optimal refractive index} and \eqref{eqn: opt dy} where equality holds for $n_\sfg=	n_\sfg^{\star} $ and $d_\sfy=	d_\sfy^{\star}$.
\end{proof}

The beamforming criterion in \eqref{eqn: bf criterion} can be satisfied by an optimal design choice of the refractive index and element spacing for the DMA.
For the special case of a symmetric angular range with $\phi_{\mathsf{max}}$ as the maximum desired beamforming angle from broadside ($\phi_{\mathsf{AoD}}=0$), let $\phi_{\sfl\sfw}=-\phi_{\mathsf{max}}$ and $\phi_{\sfu\sfp}=\phi_{\mathsf{max}}$.  This leads to $	n_\sfg^{\star}= \sin(\phi_{\mathsf{max}}) \left(\frac{f_{\sft,\mathsf{max}}+f_{\sft,\mathsf{min}}}{f_{\sft,\mathsf{max}}-f_{\sft,\mathsf{min}}}\right) $ and $	d_\sfy^{\star}=\frac{c p^{\star} (f_{\sft,\mathsf{max}}- f_{\sft,\mathsf{min}})}{2  \sin(\phi_{\mathsf{max}}) f_{\sft,\mathsf{min}}f_{\sft,\mathsf{max}}}$.
From \textbf{Case 1} of Lemma~\ref{thm:  optimal target freq}, we see that the maximum value of beamforming gain $N^2$ is obtained when the value of  $p^{\star}$ is chosen to be a natural number. 
As $d_\sfy^{\star}$ depends on  $p^{\star}$,  a reasonable choice of $p^{\star}$ should be made such that the spacing is not too large to avoid grating lobes and not too small to avoid mutual coupling between different elements. 

The  angular coverage is limited by the available choices of the waveguide substrate. Let $n_\sfg^{\mathsf{max}}$ be the maximum practically feasible waveguide refractive index.
This imposes the constraint
\begin{align}\label{eqn:  condition for max angular range}
n_\sfg^{\star}=	\sin(\phi_{\mathsf{max}}) \left(\frac{f_{\sft,\mathsf{max}}+f_{\sft,\mathsf{min}}}{f_{\sft,\mathsf{max}}-f_{\sft,\mathsf{min}}}\right)\leq n_\sfg^{\mathsf{max}}.
\end{align}
Let the  frequency range over which the DMA slots can be tuned be $\sfT_{\sfr}=f_{\sft,\mathsf{max}}-f_{\sft,\mathsf{min}}$.  For a center frequency of the tuning range $f_{\sfc}$, let $f_{\sft,\mathsf{max}}= f_{\sfc}+\frac{\sfT_{\sfr}}{2}$ and $f_{\sft,\mathsf{min}}= f_{\sfc}-\frac{\sfT_{\sfr}}{2}$.
From \eqref{eqn:  condition for max angular range}, we compute 
$\phi_{\mathsf{max}}$ as 
\begin{align}\label{eqn: phi max}
\phi_{\mathsf{max}} = \sin^{-1}\left(   \frac{n_\sfg^{\mathsf{max}} \sfT_{\sfr}}{2 f_{\sfc}}\right).
\end{align}
From \eqref{eqn: phi max}, we observe that to maximize the DMA beamforming angular coverage, the DMA should use the highest feasible waveguide refractive index and the maximum possible operating range $\sfT_{\sfr}$ for the given $f_{\sfc}$.
In Fig.~\ref{fig: max beamforming angle vs fractional operating band}, we plot $\phi_{\mathsf{max}}$ as a function of the fractional operating band $\frac{\sfT_{\sfr} }{f_{\sfc}}$ for  the cases $n_\sfg=2.5$ and $n_\sfg=4$.
A typical base station serves a sector of $60^{\circ}$, i.e., $\phi_{\mathsf{max}}=30^{\circ}$.  For this value of $\phi_{\mathsf{max}}$, the required fractional operating band is $\frac{\sfT_{\sfr} }{f_{\sfc}}= 0.25$ for $n_\sfg^{\mathsf{max}}=4$ as marked in  Fig.~\ref{fig: max beamforming angle vs fractional operating band}. This design value of the refractive index is feasible. For example, Germanium has a refractive index of 4.

	\begin{figure}
	\centering
		\includegraphics[width=0.4\textwidth]{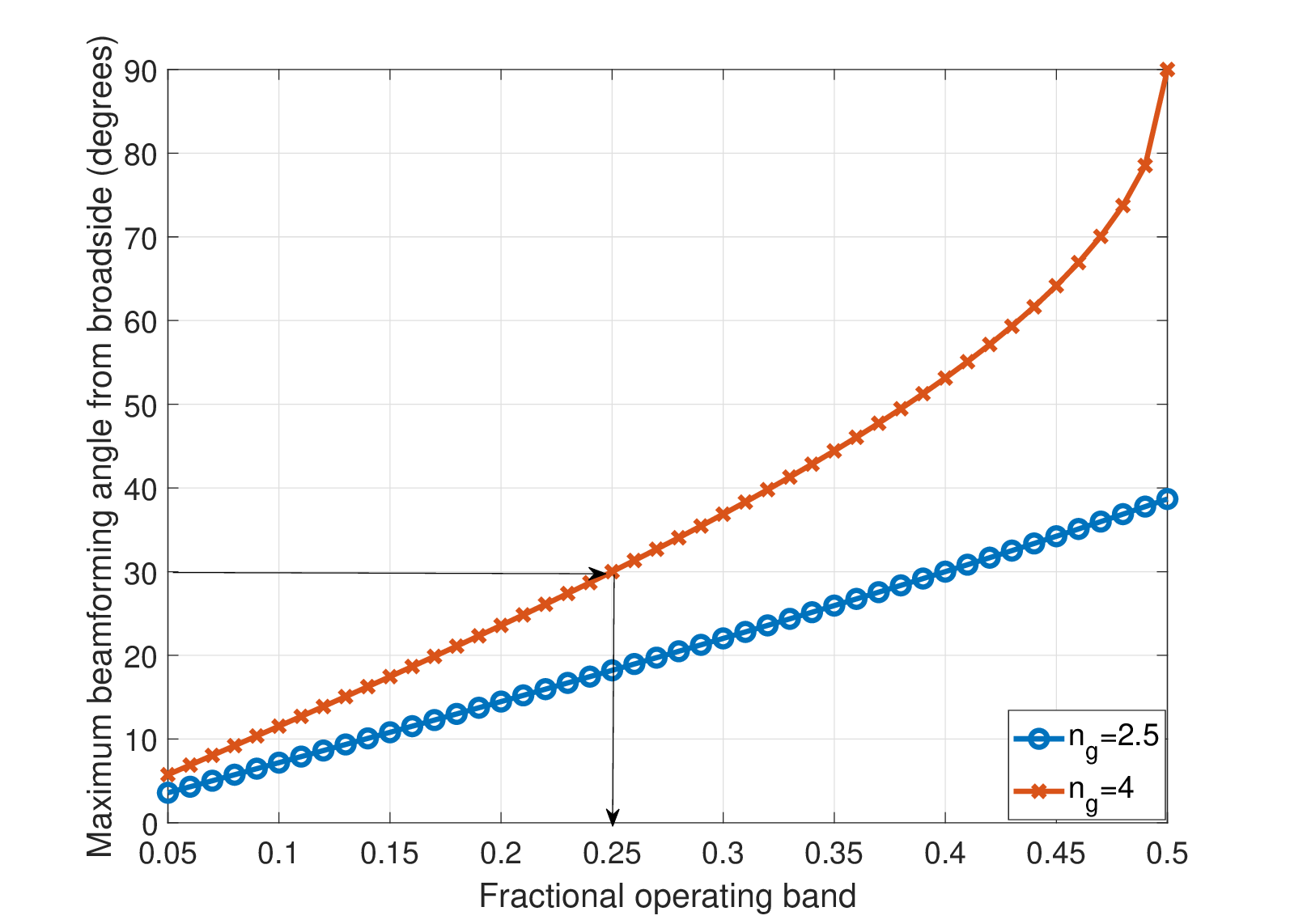}    
		\caption
		{The DMA maximum gain angular range from broadside is plotted as a function of  the fractional operating band for different refractive indices. The maximum beamforming angle can be improved using wider operating band and higher refractive
			index.
		}
		\label{fig: max beamforming angle vs fractional operating band}
\end{figure}

				\subsection{DMA beamforming gain frequency response analysis}	\label{subsec: DMA beamforming gain frequency response and half-power bandwidth analysis}

			So far, we have demonstrated how the DMA can achieve a gain of $N^2$ at AoD $\phi_{\mathsf{AoD}}$ only at a particular frequency $f^{\star}_{\sft}(\phi_{\mathsf{AoD}})$. The data transmission, however, is for a specific bandwidth $\mathsf{B}$.
		A phased-array experiences degradation in the beamforming gain for frequencies away from the center frequency due to  beam-squint~\cite{10002944}. While DMA beamforming also exhibits beam-squint, the DMA beamforming gain experiences further wideband losses because of the  per-element frequency-selective Lorentzian response shown in \eqref{eqn: Lorentzian form}.

		 We analyze the frequency response of the DMA at an AoD inside the  desired angular range for a bandwidth around the optimal operating frequency. 
			For a DMA designed using the refractive index and element spacing from Lemma~\ref{thm:  ng and dy design},  the maximum beamforming gain $N^2$ is attained at  $	f^{\star}_{\sft}(\phi)$  for any AoD $\phi $ in the desired angular range $ [\phi_{\sfl\sfw}, \phi_{\sfu\sfp}]$. To attain the maximum beamforming gain, the resonant frequencies are tuned based on Lemma~\ref{thm:  optimal target freq}, i.e.,  $	f_{\sfr, n}^{\star}(\phi)=f^{\star}_{\sft}(\phi), \quad \forall~n~\in~\{1, \dots, N\}$. Hence, we have $	[\bff_{\mathsf{DMA}}(f)]_{n}= \frac{\Gamma f}{2\pi (f^{\star}_{\sft}(\phi))^2- 2\pi f^2 +\sfj \Gamma f }, \quad \forall n\in\{1, \dots, N\}.$
		Substituting $[\bff_{\mathsf{DMA}}(f)]_{n}$ in \eqref{eqn: G dma}, we have 
		\begin{align}\label{eqn:  G_dma_phit_f_eqn_in_proof}
			\cG_{\mathsf{DMA}}(\phi, f)= \bigg|  \frac{\Gamma f}{2\pi (f^{\star}_{\sft}(\phi))^2- 2\pi f^2 +\sfj \Gamma f }\bigg|^2 \bigg| \bm{1}^T \bh(\phi, f)    \bigg|^2.
		\end{align}
		To analyze the overall DMA beamforming response, we write $\cG_{\mathsf{DMA}}(\phi, f)$ as product of the per-element frequency response  $\cG_{\mathsf{element}}(\phi, f)= \left|  \frac{\Gamma f}{2\pi (f^{\star}_{\sft}(\phi))^2- 2\pi f^2 +\sfj \Gamma f }\right|^2 $ and the frequency response of the array of DMA slots $\cG_{\mathsf{array}}(\phi, f)=\left| \bm{1}^T \bh(\phi, f)    \right|^2 $.
				 We can verify that the maximum value of $N^2$ is attained at $f=f^{\star}_{\sft}(\phi)$.
				   For $f\neq f^{\star}_{\sft}(\phi)$, the beamforming gain $\cG_{\mathsf{DMA}}(\phi, f) < N^2$ as $ \cG_{\mathsf{element}}(\phi, f) <1$ and $\cG_{\mathsf{array}}(\phi, f) <~ N^2$.
				   
			To analyze the  DMA bandwidth for which the beamforming gain does not fall below a  certain threshold of $\nu$ times the maximum,  we define the $\nu$ cutoff frequencies.
			Let the upper $\nu$ cutoff frequency be the highest frequency for which  $\cG_{\mathsf{DMA}}(\phi, f) \geq \nu N^2$ and be denoted as $f^{\sfu}_{\nu} $. Let the lower  $\nu$ cutoff frequency be the lowest frequency for which  $\cG_{\mathsf{DMA}}(\phi, f) \geq \nu N^2$ and be denoted as $f^{\sfl}_{\nu} $. 
			  Let us define the $\nu$ bandwidth as $\mathsf{B}_{\nu}~=~f^{\sfu}_{\nu}~-~f^{\sfl}_{\nu}.$
We first analyze the cutoff frequencies  and bandwidth for $\cG_{\mathsf{element}}(\phi, f)$ in the following lemma.
	Let $f^{\sfu}_{\sfee,\nu} $, $f^{\sfl}_{\sfee,\nu} $, and $\mathsf{B}_{\sfee,\nu}$ be the upper cutoff frequency, lower cutoff frequency, and the bandwidth for the term $\cG_{\mathsf{element}}(\phi, f)$.
 		
 				\begin{lemma}\label{thm: DMA 3 dB bandwidth}
 					Let $\rho(\nu)=\frac{\nu}{1-\nu}$.
 					For $\cG_{\mathsf{element}}(\phi, f)$, the closed-form exact expression   for the   $\nu$ cutoff frequencies is 
 					\begin{subequations}
 						\begin{equation}\label{eqn: closed form f 3db lower}
 								f^{\sfl}_{\sfee,\nu} = \sqrt{(f^{\star}_{\sft}(\phi))^2+\frac{\Gamma^2-\sqrt{\Gamma^4+16 \Gamma^2 \pi^2 (f^{\star}_{\sft}(\phi))^2\rho(\nu)}}{8\pi^2\rho(\nu)}},
 							\end{equation}
 				\begin{equation}\label{eqn: closed form f 3db upper}
 				f^{\sfu}_{\sfee,\nu} = \sqrt{(f^{\star}_{\sft}(\phi))^2+\frac{\Gamma^2+\sqrt{\Gamma^4+16 \Gamma^2 \pi^2 (f^{\star}_{\sft}(\phi))^2\rho(\nu)}}{8\pi^2\rho(\nu)}}.
 				\end{equation}
 					\end{subequations}
 				The $\nu$ bandwidth for $\cG_{\mathsf{element}}(\phi, f)$ is expressed as
 					\begin{align}\label{eqn: B 3dB closed-form}
 							\mathsf{B}_{\sfee,\nu}=\frac{\Gamma}{2\pi \sqrt{\rho(\nu)}}+ \cO(\Gamma^3).
 					\end{align} 
 				\end{lemma}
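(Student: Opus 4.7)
The plan is to obtain the cutoff frequencies by setting $\cG_{\mathsf{element}}(\phi, f)=\nu$ and solving the resulting equation exactly in $f^2$. Explicitly, I first write
\[
\cG_{\mathsf{element}}(\phi,f)=\frac{\Gamma^{2} f^{2}}{4\pi^{2}\bigl((f^{\star}_{\sft}(\phi))^{2}-f^{2}\bigr)^{2}+\Gamma^{2}f^{2}},
\]
so that the cutoff condition $\cG_{\mathsf{element}}(\phi,f)=\nu$ rearranges, after using $\rho(\nu)=\nu/(1-\nu)$, to
\[
\bigl((f^{\star}_{\sft}(\phi))^{2}-f^{2}\bigr)^{2}=\frac{\Gamma^{2}f^{2}}{4\pi^{2}\rho(\nu)}.
\]
Treating this as a quadratic in $u:=f^{2}$ gives $u^{2}-\bigl(2(f^{\star}_{\sft}(\phi))^{2}+\Gamma^{2}/(4\pi^{2}\rho(\nu))\bigr)u+(f^{\star}_{\sft}(\phi))^{4}=0$, and the quadratic formula produces the two closed-form expressions in \eqref{eqn: closed form f 3db lower}--\eqref{eqn: closed form f 3db upper} after selecting the positive square root (the element response is unimodal around $f^{\star}_{\sft}(\phi)$, so the two positive roots are exactly $f^{\sfl}_{\sfee,\nu}$ and $f^{\sfu}_{\sfee,\nu}$).

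For the bandwidth, the key observation is that the two roots $u^{\pm}$ of the quadratic satisfy Vieta's relations $u^{+}u^{-}=(f^{\star}_{\sft}(\phi))^{4}$ and $u^{+}+u^{-}=2(f^{\star}_{\sft}(\phi))^{2}+\Gamma^{2}/(4\pi^{2}\rho(\nu))$. This lets me avoid square-rooting the unwieldy closed forms directly: I write
\[
\mathsf{B}_{\sfee,\nu}=\sqrt{u^{+}}-\sqrt{u^{-}}=\frac{u^{+}-u^{-}}{\sqrt{u^{+}}+\sqrt{u^{-}}},
\]
use $(\sqrt{u^{+}}+\sqrt{u^{-}})^{2}=(u^{+}+u^{-})+2\sqrt{u^{+}u^{-}}=4(f^{\star}_{\sft}(\phi))^{2}+\Gamma^{2}/(4\pi^{2}\rho(\nu))$, and factor $u^{+}-u^{-}=\sqrt{\Gamma^{4}+16\pi^{2}\Gamma^{2}(f^{\star}_{\sft}(\phi))^{2}\rho(\nu)}/(4\pi^{2}\rho(\nu))=\Gamma\sqrt{\rho(\nu)}\sqrt{4(f^{\star}_{\sft}(\phi))^{2}+\Gamma^{2}/(4\pi^{2}\rho(\nu))}/(\pi\rho(\nu))$. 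The square-root factors then cancel, leaving the clean expression $\Gamma/(2\pi\sqrt{\rho(\nu)})$, with any residual dependence on $\Gamma$ beyond leading order captured by $\cO(\Gamma^{3})$.

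The main obstacle I anticipate is simply bookkeeping: the expression $\sqrt{\Gamma^{4}+16\pi^{2}\Gamma^{2}(f^{\star}_{\sft}(\phi))^{2}\rho(\nu)}$ looks intimidating, so without the Vieta's-relations shortcut it would be tempting to Taylor-expand each closed-form cutoff in powers of $\Gamma$ and subtract. That brute-force expansion is tedious and the cancellation of the $(f^{\star}_{\sft}(\phi))$-dependent terms is not obvious term-by-term. The Vieta factorization makes the cancellation structural and produces the stated bandwidth cleanly, which is why I would organize the proof in that order rather than trying to expand \eqref{eqn: closed form f 3db lower} and \eqref{eqn: closed form f 3db upper} separately.
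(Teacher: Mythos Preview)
Your derivation of the cutoff frequencies is correct and matches the paper's: set $\cG_{\mathsf{element}}(\phi,f)=\nu$, rewrite as a quadratic in $f^{2}$, and apply the quadratic formula. For the bandwidth, however, you take a genuinely different and sharper route. The paper obtains \eqref{eqn: B 3dB closed-form} by binomially expanding each of the two closed-form cutoff expressions in powers of $\Gamma$ and subtracting, which is exactly the ``brute-force expansion'' you flag as tedious; this naturally leaves an $\cO(\Gamma^{3})$ remainder. Your Vieta argument is cleaner and in fact proves more: since $u^{+}u^{-}=(f^{\star}_{\sft}(\phi))^{4}$ and $u^{+}+u^{-}=2(f^{\star}_{\sft}(\phi))^{2}+\Gamma^{2}/(4\pi^{2}\rho)$, one has $(\sqrt{u^{+}}-\sqrt{u^{-}})^{2}=u^{+}+u^{-}-2\sqrt{u^{+}u^{-}}=\Gamma^{2}/(4\pi^{2}\rho)$, so $\mathsf{B}_{\sfee,\nu}=\Gamma/(2\pi\sqrt{\rho(\nu)})$ \emph{exactly}, with the $\cO(\Gamma^{3})$ term identically zero. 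That is a strict improvement over the paper's statement. One minor bookkeeping slip: in your factorization of $u^{+}-u^{-}$ there is a stray factor of two (your intermediate expression evaluates to $\Gamma/(\pi\sqrt{\rho})$ rather than $\Gamma/(2\pi\sqrt{\rho})$ after the cancellation), though you quote the correct final answer; the one-line computation of $(\sqrt{u^{+}}-\sqrt{u^{-}})^{2}$ above avoids this entirely.
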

				\begin{proof}
			Mathematically, the  closed-form expression for the $\nu$ cutoff frequencies are obtained as a solution to $\frac{\Gamma^2 f^2}{(2\pi (f^{\star}_{\sft}(\phi))^2- 2\pi f^2 )^2+ \Gamma^2 f^2}=\nu.$
			We express this equation as a quadratic in $f^2$.
			Solving the quadratic,
			we obtain the desired cutoff frequencies in \eqref{eqn: closed form f 3db lower} and \eqref{eqn: closed form f 3db upper}. 
			The $\nu$ bandwidth closed-form expression in \eqref{eqn: B 3dB closed-form} is obtained by applying binomial expansion to \eqref{eqn: closed form f 3db lower} and \eqref{eqn: closed form f 3db upper}.
				\end{proof}
				
				The frequency response of $\cG_{\mathsf{element}}(\phi, f)$ has sharp degradation with frequency compared to  $ \cG_{\mathsf{array}}(\phi, f) $. 	Hence, we use the result from Lemma~\ref{thm: DMA 3 dB bandwidth} directly to obtain $f^{\sfl}_{\nu} \approx	f^{\sfl}_{\sfee,\nu} $, $f^{\sfu}_{\nu} \approx	f^{\sfu}_{\sfee,\nu} $, and $ 	\mathsf{B}_{\nu} \approx \mathsf{B}_{\sfee,\nu}$. 
					The key design insight is that the $\nu$  bandwidth $	\mathsf{B}_{\mathsf{\nu}}$ is  directly proportional to the DMA damping factor $\Gamma$. For example,  for 3 dB bandwidth, i.e, $\nu=0.5$, we have $\rho(0.5)=1$ and $	\mathsf{B}_{0.5}\approx \frac{\Gamma}{2\pi}$.
		
%
					\begin{figure}
					\centering
					\includegraphics[width=0.4\textwidth]{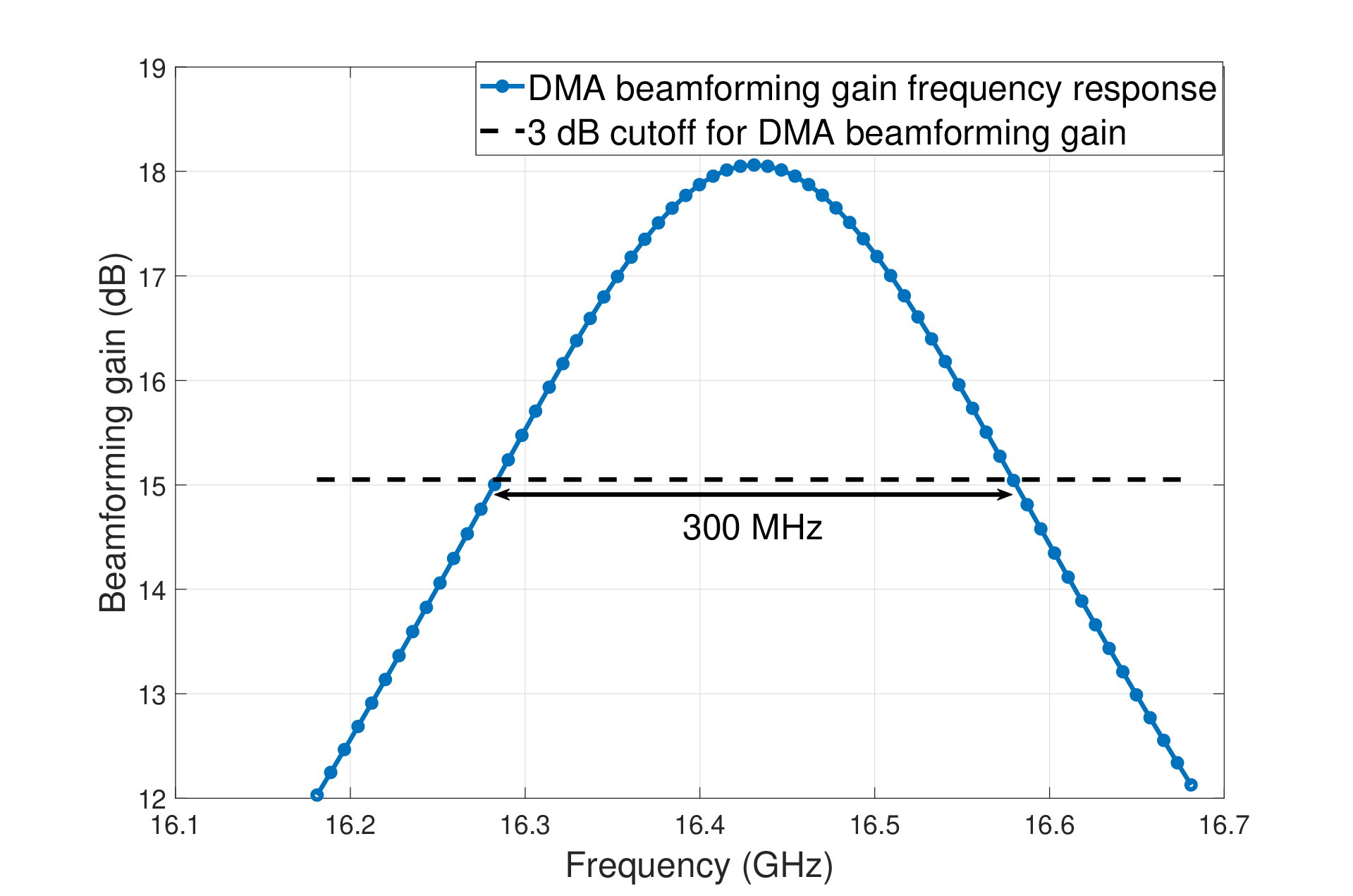}    
					\caption
					{The beamforming gain of the DMA configured to maximize gain at $\phi=-18^{\circ}$ and $	f^{\star}_{\sft}(-18^{\circ})= 16.43$ GHz is shown as a function of frequency.
					The gain	 decreases for frequencies away from the optimal operating frequency. The 3 dB bandwidth of the beamforming gain frequency response is proportional to the damping factor.
					}
					\label{fig: Beamforming_gain_freq_response_3dB}
				\end{figure}

\subsection{Numerical results for DMA beamforming gain}\label{subsec: numerical results bf gain}

					We verify the closed-form expressions from Lemma~\ref{thm: DMA 3 dB bandwidth} through numerical simulations. Let the number of elements on the DMA be $N=8$.  Let the center frequency of the DMA operating band be $f_{\sfc}=15$ GHz and the center wavelength $\lambda_{\sfc}=\frac{c}{f_{\sfc}}$.
					Let the DMA operating range be  $f_{\sft,\mathsf{min}}=12$ GHz and $f_{\sft,\mathsf{max}}=18$ GHz, and let the  desired angular range be $[-30^{\circ}, 30^{\circ}]$. From Lemma~\ref{thm:  ng and dy design}, we have the optimal spacing as
					 $d_\sfy^{\star}=0.42 \lambda_{\sfc}$ and the optimal refractive index $n_\sfg^{\star}=2.5$.  The  angular direction is set to  $\phi=-18^{\circ}$, which lies inside the desired angular range $[-30^{\circ}, 30^{\circ}]$. Let $\Gamma=\frac{2\pi f_{\sfc}}{50}$. To maximize the beamforming gain at $\phi$, we set the operating frequency and the DMA resonant frequencies using the expression in   Lemma~\ref{thm:  optimal target freq} with choice of $p^{\star}=1$. Using these parameters, we obtain the operating frequency $	f^{\star}_{\sft}(-18^{\circ})= 16.43$ GHz from \eqref{eqn: f_t given phi_t}. 
					 In Fig.~\ref{fig: Beamforming_gain_freq_response_3dB}, we plot the frequency response of the DMA beamforming gain at $\phi=-18^{\circ}$. As expected, the DMA beamforming gain frequency response curves peaks at the operating frequency 16.43 GHz.  
					 From  Fig.~\ref{fig: Beamforming_gain_freq_response_3dB},  the 3 dB bandwidth equals 300 MHz which validates Lemma~\ref{thm: DMA 3 dB bandwidth}.

					We compare our proposed optimal operating frequency approach with methods used in the prior work.  
				\subsubsection{Fixed operating frequency approach}
				Prior work on DMA beamforming is based on optimizing the beamforming gain for a specific AoD and fixed operating frequency\cite{carlson2023hierarchical}.  
				Let the operating frequency be fixed to the center frequency $f_{\sfc}$.
				The DMA resonant frequencies are configured using the closed-form expression in \eqref{eqn: f star l closed form} where $f_{\sft}=f_{\sfc}$.
				In Fig.~\ref{fig: bf_gain_15G_binary_vs_continuous_tuning_center_vs_target}, we plot the optimal DMA beamforming gain at $f_{\sfc}$ as a function of the  angle $\phi$, mathematically denoted as $\cG^{\star}_{\mathsf{DMA}}(\phi, f_{\sfc})$ and computed using  \eqref{eqn: Gstar DMA phi t ft}. 
				 We observe that $\cG^{\star}_{\mathsf{DMA}}(\phi, f_{\sfc}) \leq \cG^{\star}_{\mathsf{DMA}}(\phi, f^{\star}_{\sft} (\phi)), \quad \forall \phi $. The equality holds only for a single  angle denoted as $\phi_{\sfc}$ such that $f^{\star}_{\sft} (\phi_{\sfc})= f_{\sfc}$. Mathematically, it is expressed as 
				\begin{align}\label{eqn phi c}
					\phi_{\sfc}=\sin^{-1}\left(\frac{c}{f_{\sfc} d_\sfy}-n_\sfg\right).
				\end{align}
				In Fig.~\ref{fig: bf_gain_15G_binary_vs_continuous_tuning_center_vs_target}, we use the simulation parameters same as   in Fig.~\ref{fig: Beamforming_gain_freq_response_3dB}.
				In this case, \eqref{eqn phi c} evaluates to $\phi_{\sfc}=-5.74^{\circ}$.
				In Fig.~\ref{fig: bf_gain_15G_binary_vs_continuous_tuning_center_vs_target},	 we see that 
				$\cG^{\star}_{\mathsf{DMA}}(\phi_{\sfc}, f_{\sfc}) =N^2 $.
				For all angles $\phi \neq \phi_{\sfc}$, we see that $\cG^{\star}_{\mathsf{DMA}}(\phi, f_{\sfc}) < N^2$. 
				In Fig.~\ref{fig: optimal_target_freq_target_angle}, we plot the optimal frequency $f^{\star}_{\sft} (\phi)$ from \eqref{eqn: f_t given phi_t} as function of $\phi$.
				By optimally choosing the operating frequency based on the  angle as shown in Fig.~\ref{fig: optimal_target_freq_target_angle}, we significantly outperform the fixed operating frequency approach as indicated in Fig.~\ref{fig: bf_gain_15G_binary_vs_continuous_tuning_center_vs_target} which shows a gain of $N^2$ in the desired angular range.
				
			The maximum gain angular range depends on the DMA upper and lower tuning frequency limits.
				 As $f_{\sft,\mathsf{min}}=12$ GHz and $f_{\sft,\mathsf{max}}=18$ GHz, the maximum beamforming angle for $n_\sfg=2.5$ is $\pm 30^{\circ}$. Beyond $\pm 30^{\circ}$, we observe a clipping  in the optimal operating frequency curve in Fig.~\ref{fig: optimal_target_freq_target_angle} to satisfy the operating frequency constraint \eqref{eq: ft range}.  This also reflects in the beamforming gain plot in Fig.~\ref{fig: bf_gain_15G_binary_vs_continuous_tuning_center_vs_target}, where we observe a  decrease in the beamforming gain outside $\pm 30^{\circ}$. 
				Without the DMA constraint~\eqref{eq: ft range}, the beamforming gain will be $N^2$ for the whole angular range.
				This is practically infeasible as the DMA would need to operated over a very wide frequency range of 10 GHz to 24 GHz.
			This is demonstrated	in Fig.~\ref{fig: optimal_target_freq_target_angle},
				where the monotonic curve extends beyond the feasible frequency range.

				\begin{figure}
					\centering
					\includegraphics[width=0.5\textwidth]{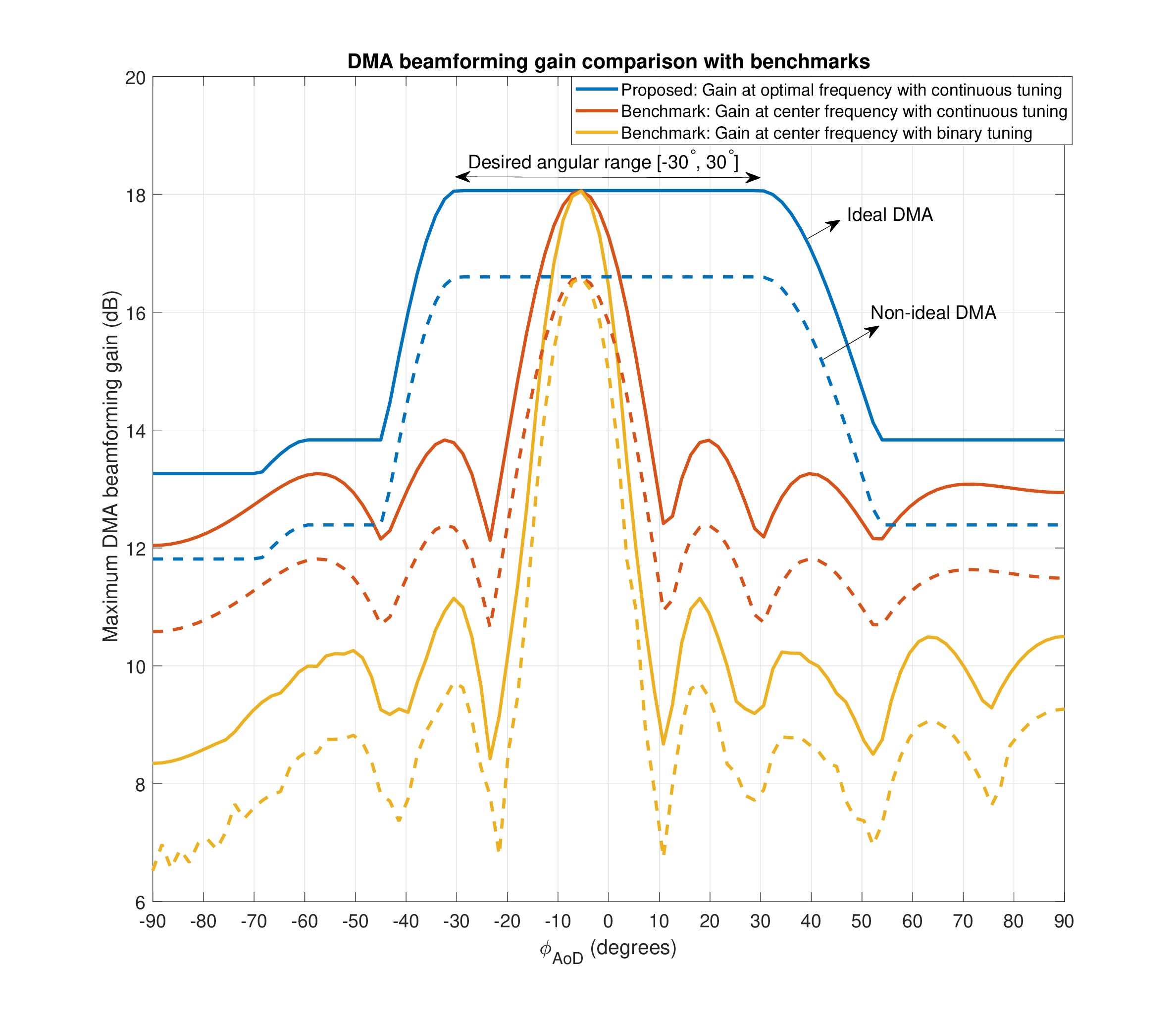}    
					\caption
					{The DMA beamforming gain as a function of the AoD is shown for both ideal and non-ideal DMA for different benchmarks.
					The	proposed approach outperforms the benchmarks with a fixed operating frequency for all  angles.  Beyond $\pm 30^{\circ}$,  the beamforming gain decreases because of the operating frequency constraint \eqref{eq: ft range}. 
					}
					\label{fig: bf_gain_15G_binary_vs_continuous_tuning_center_vs_target}
				\end{figure}
				
				\begin{figure}
					\centering
					\includegraphics[width=0.45\textwidth]{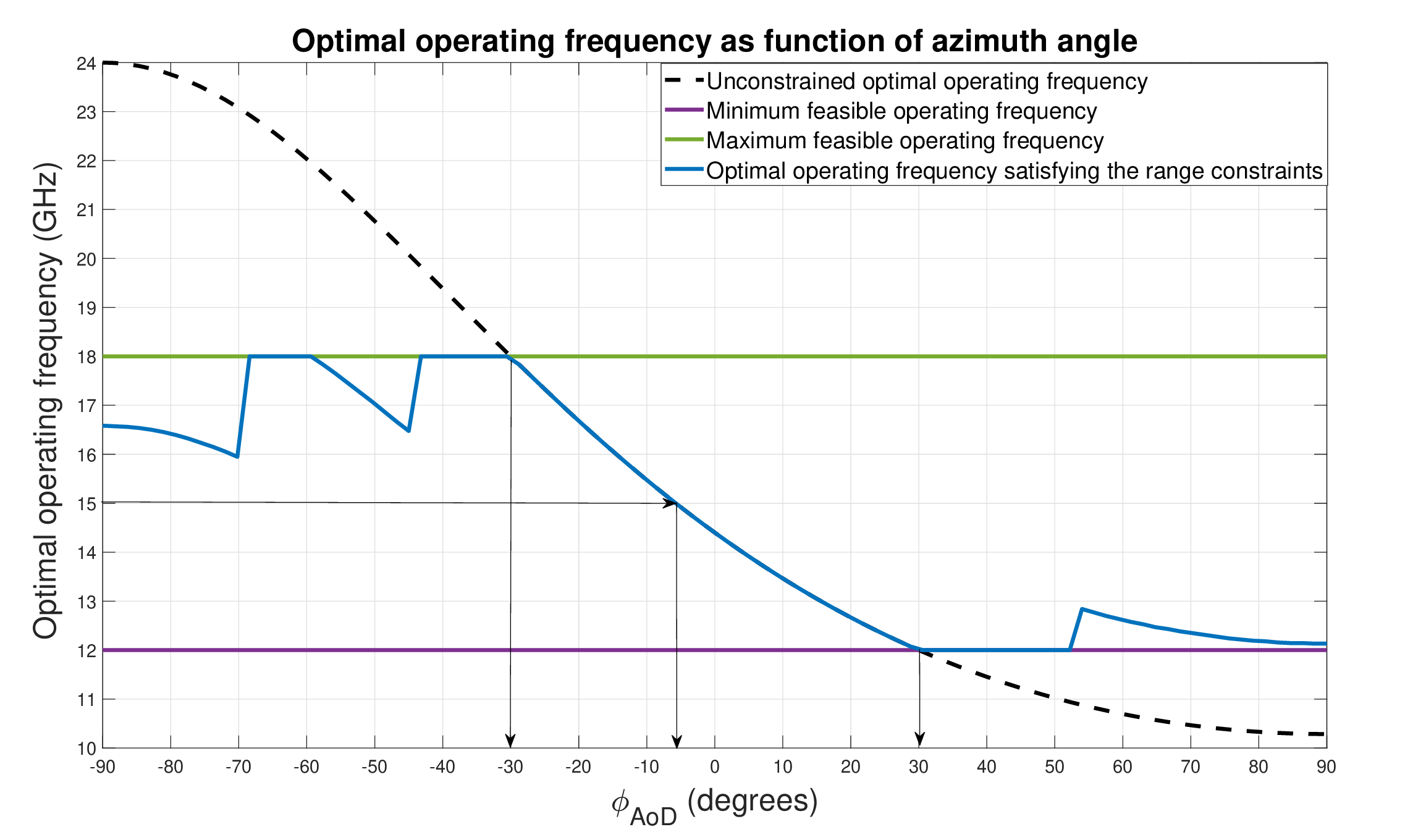}    
					\caption
					{The optimal operating frequency from our proposed approach is shown as a function of the AoD.
				A monotonic behavior for the desired angular range $\pm 30^{\circ}$ is observed which corresponds to the maximum gain in Fig.~\ref{fig: bf_gain_15G_binary_vs_continuous_tuning_center_vs_target}.
					}
					\label{fig: optimal_target_freq_target_angle}
				\end{figure}

				\subsubsection{Comparison between binary and continuous weights}

				In the analysis so far, we have assumed that DMA resonant frequencies are tunable to  a continuous frequency value.
				{Another option for configuring the DMA elements is  a binary tuning mechanism leveraging PIN diodes instead of varactor diodes \cite{LinEtAlHighEfficiencyReconfigurableElementDynamic2020}. PIN diodes can be used as RF switches with on and off states, as opposed to the continuous varactor diode tuning. While the PIN diode devices themselves consume more power than passive varactor diodes, varactor-based DMAs require high-resolution DACs for continuous tuning and can potentially consume more power than a PIN diode-based DMA system \cite{WangEtAlReconfigurableIntelligentSurfacePower2024,TrichopoulosEtAlDesignEvaluationReconfigurableIntelligent2022}. Moreover, prior signal processing work on DMAs has highlighted the ability of binary tuning strategies to closely approach the performance of a continuous tuning strategy in narrowband settings \cite{ShlezingerEtAlDynamicMetasurfaceAntennasUplink2019,WangEtAlDynamicMetasurfaceAntennasMIMOOFDM2021,9847609}.}			
				The binary weight beamforming problem is formulated as 
				\begin{subequations}\label{problem5}
					\begin{alignat}{3}
						\textbf{P4}:& \underset{\bff_{\mathsf{DMA}}(f_{\sfc})}{\mbox{ max }} \cG_{\mathsf{DMA}}(\phi, f_{\sfc})
						, \\
						&\text{ s.t. } [\bff_{\mathsf{DMA}}(f_{\sfc})]_{n}\in \{0,1\}.
						\label{problem5_a}
					\end{alignat}
				\end{subequations}
			 We solve \textbf{P4} by maximizing over the set of feasible beamforming vectors by brute force search.  The optimal beamforming gain at the  center frequency obtained after solving  \textbf{P4} is shown in  Fig~\ref{fig: bf_gain_15G_binary_vs_continuous_tuning_center_vs_target}. We see that when $\phi = \phi_{\sfc}$, the DMA beamforming gain for binary and continuous weights are equal to the theoretical maximum $N^2$. This observation is consistent with the result from Lemma~\ref{thm:  optimal target freq} that at the optimal operating frequency, the resonant frequencies on all elements are same as the operating frequency.
				For all  angles $\phi \neq \phi_{\sfc}$,  we see that the beamforming gain from binary weights is decreased compared to the continuous DMA beamforming weights.  For angles outside $[-14^{\circ}, 2^{\circ}]$, the decrease is higher than 2 dB.  This is because some of the antenna elements have beamforming weight as zero, i.e., they do not radiate, which reduces the effective number of antennas and hence the beamforming gain.

				\subsubsection{Non-ideal DMA}
				In the simulations so far, we have ignored the attenuation within the waveguide. For a practical DMA, the waveguide attenuation can degrade the beamforming gain~\cite{carlson_qif1}. To analyze the impact of waveguide attenuation, we denote the attenuation factor as $\alpha$ and assume an exponentially  decaying attenuation vector as $\bg(f)= [1, e^{-\alpha d_\sfy}, \dots, e^{-\alpha (N-1)d_\sfy  }]^T$.  The effective propagation channel is modified  as
				$	\bh(\phi,f)=~\bh_{\mathsf{dma}}(f) \odot 	\ba(\phi, f) \odot \bg(f).$
				For simulations, we use $\alpha =6$\cite{smith_analysis_2017} and use the resonant frequency solution corresponding to the ideal DMA case.
				In Fig.~\ref{fig: bf_gain_15G_binary_vs_continuous_tuning_center_vs_target}, we compare the beamforming gain of the ideal DMA and the non-ideal DMA. We see that the non-ideal DMA with attenuation exhibits a slight degradation in the beamforming gain compared to the ideal DMA without attenuation. We also see that the proposed optimal frequency approach outperforms the benchmark fixed frequency approach for the non-ideal DMA too. 
				Although inclusion of attenuation in the model leads to beamforming gain degradation, the overall comparison trend remains the same. Hence, we continue with the use of the ideal DMA model for subsequent analysis.

				\section{Single-shot beam training with DMAs}\label{sec: Single-shot beam training with DMAs}

				The analysis in Section~\ref{sec: Frequency-selective beamforming gain optimization and analysis} assumes that the transmitter knows the AoD $\phi$.
			In practice,  the transmitter performs beam training prior to data transmission to determine the beamforming vector based on AoD estimation.
			 In conventional exhaustive search beam training, the beamforming vector is determined by  sequentially configuring the array for different AoDs and choosing the angle with the highest beamforming gain to compute the optimal beamforming vector for the desired user.
		This process is inefficient for large arrays because the number of time slots required to perform beam training increases with the number of antennas\cite{8240727}.
			In this section, we leverage the frequency reconfigurability of the DMA and propose a  beam training procedure that estimates the optimal  DMA configuration in a single time-slot. 

				\subsection{System model for an array of DMAs }
				
				Let an array of $N_\sfz$ DMAs  be arranged in the $\sfy\sfz$ plane as shown in Fig.~\ref{fig: dma array diagram}, where each DMA has $N_\sfy$ radiating slots.
				We assume that each DMA has the same intrinsic phase-shift.
				 For a receiver on the $\sfx\sfy$ plane, the effective propagation far-field channel for the array of DMAs is denoted as $	\bh_{\sfA}(\phi,f)\in \bbC^{N_\sfz N_\sfy \times 1}$ and expressed in terms of $\bh(\phi,f)\in \bbC^{N_\sfy  \times 1}$ as
				\begin{align}
						\bh_{\sfA}(\phi,f)= [\underbrace{\bh^T(\phi,f), \bh^T(\phi,f), \dots, \bh^T(\phi,f)}_{N_\sfz \text{ times}}]^T.
				\end{align}
				As the elevation angle $\theta=90^{\circ}$, the propagation channel from each DMA to the receiver in $\sfx\sfy$ plane is the same.
				The extension to do beam training in both azimuth and elevation directions is a topic of future study.

				The beamforming vector for the $m$th DMA in the array is denoted as $\bff_{\mathsf{DMA},m}(f)$.
				  The beamforming vector for the array of DMAs is denoted as
				   $\bff_{\mathsf{ADMA}}(f)$ and  expressed as $\bff_{\mathsf{ADMA}}(f) =[\bff^T_{\mathsf{DMA},1}(f), \bff^T_{\mathsf{DMA},2}(f), \dots , \bff^T_{\mathsf{DMA},N_\sfz}(f)]^T.$
				  We assume that the same signal is input to each DMA and is represented by $	\sfH_{0}(f) $ at the waveguide  input. Hence, the far-field component of the electric field is expressed similar to the single DMA expression from \eqref{eqn: E phi f closed form} as 
				  \begin{align}
				  \sfE_{\phi}(f)	=	\frac{\eta_0}{4\pi r}\left(\frac{2 \pi  f}{c}\right)^2  \sfH_{0}(f) \sfF \sfQ   \bff^T_{\mathsf{ADMA}}(f) \bh_{\sfA}(\phi,f) \left[\frac{\text{V}/\text{m}}{\text{Hz}}\right].
				  \end{align}
				  The frequency-selective beamforming gain  for the DMA array is defined similar to that of a single DMA as 
				  \begin{align}\label{eqn: G dmaa}
				  	\cG_{\mathsf{ADMA}}(\phi, f)\!=  \!\left|\bff^T_{\mathsf{ADMA}}(f) \bh_{\sfA}(\phi,f)\right|^2 \!\!=\!\! \left|\!\sum_{m=1}^{N_\sfz}\! \!\bff^T_{\mathsf{DMA},m}(f)\bh(\phi,f)    \right|^2\!\!\!.
				  \end{align}
				  Similarly, the received power spectral density is $	P_{\mathsf{R}}(\phi,f) = \left(\frac{\lambda}{4 \pi r}\right)^2 P_{\mathsf{T}}(f) 	\cG_{\mathsf{ADMA}}(\phi, f) \left[\frac{\text{W}}{\text{Hz}}\right]$.

				Let the resonant frequency of the $n$th element on the $m$th DMA in the array be denoted as $f_{\sfr, n, m}$. Similar to the analysis in Section~\ref{sec: Frequency-selective beamforming gain optimization and analysis}, the maximum beamforming gain $	\cG^{\star}_{\mathsf{ADMA}}(\phi, f^{\star}_{\sft}(\phi))=N_\sfz^2 N_\sfy^2 $ at a specific angle $\phi$ is attained when resonant frequencies of all elements on all DMAs are set to the optimal operating frequency
				$ f_{\sfr, n, m}=f^{\star}_{\sft}(\phi)$.
				In practice, this optimal resonant frequency needs to be estimated using the beam training procedure as the AoD is unknown. We discuss the beam training procedure to estimate the optimal resonant frequency for the DMA.
				
			\begin{figure}
				\centering
				\includegraphics[width=0.5\textwidth]{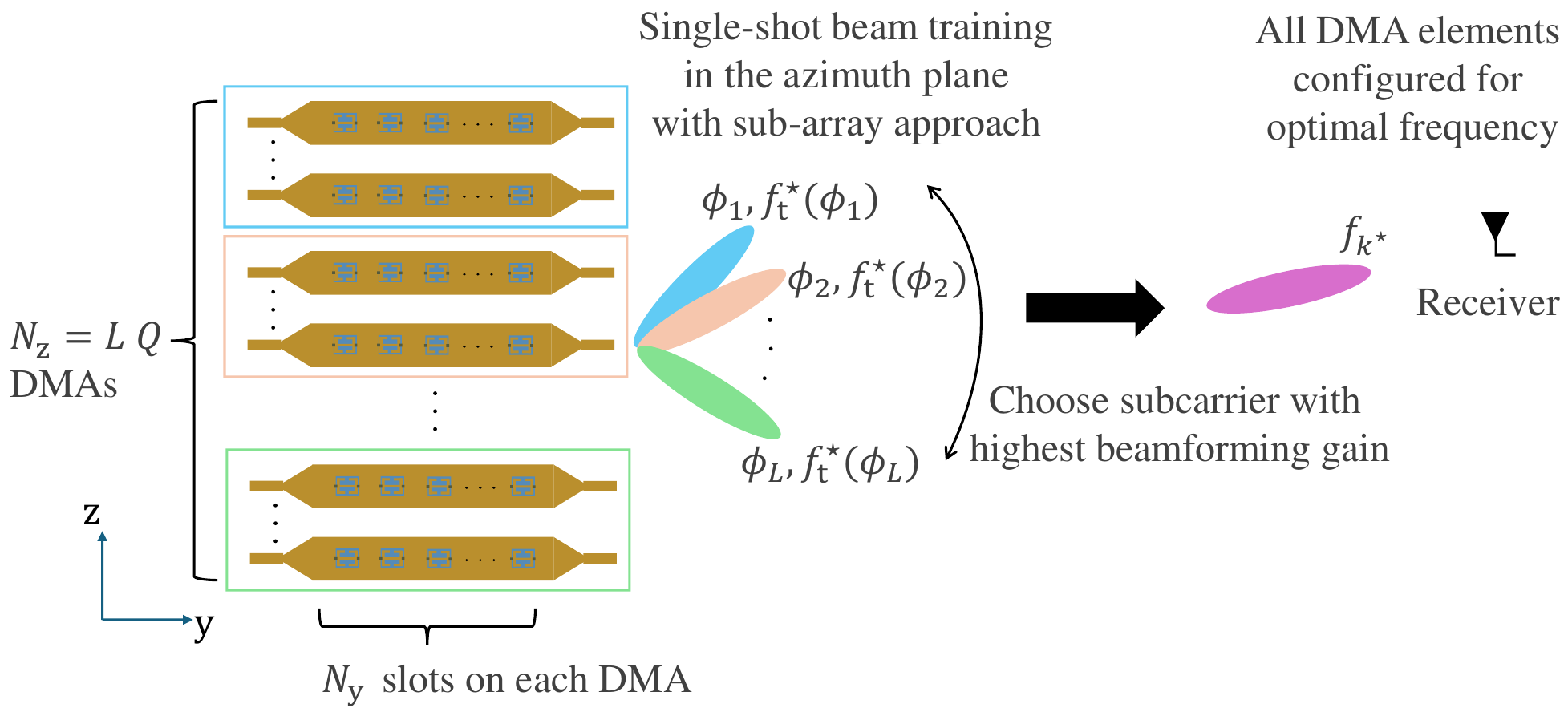}    
				\caption
				{An array of $N_\sfz$ DMAs where each DMA has $N_\sfy$ radiating slots.
					The sub-array approach enables simultaneous probing of different directions using different frequencies to estimate the optimal resonant frequency. Post beam training, all elements of the DMA array are configured for the optimal frequency.
				}
				\label{fig: dma array diagram}
			\end{figure}
				
				\subsection{Beam training with a single OFDM symbol}

				The goal of beam training is to estimate the optimal resonant frequency configuration for the DMA array to serve a user in a specific direction. To avoid training overhead, we propose a single-shot beam training procedure which enables probing different angular directions simultaneously with a single OFDM symbol by leveraging the frequency reconfigurability of the DMA.
			To achieve this goal, 	we use the sub-array approach for beam training. Let the array of  $N_\sfz$ DMAs be grouped into $L$ sub-groups where each sub-group has $Q=\frac{N_\sfz}{L}$ DMAs as shown in Fig.~\ref{fig: dma array diagram}. We assume that the  angle $\phi$ lies inside the angular range $[-\phi_{\mathsf{max}}, \phi_{\mathsf{max}}]$.  We divide this angular range into $L$  sectors where the $\ell$th sector has 
its peak gain at angle  $\phi_{\ell}$.
				The DMA array is configured such that the $\ell$th DMA sub-group focuses at  $\phi_{\ell}$ at the corresponding optimal frequency 
				$	f^{\star}_{\sft}(\phi_{\ell})$. The DMA resonant frequencies for  beam training  are configured as 
				\begin{align}\label{eqn: beam training resonant frequencies}
				f_{\sfr, n, m}= f^{\star}_{\sft}(\phi_{\ell}), \quad \forall n,   m \in \{ Q(\ell-1)+1, \dots, Q\ell\} .
				\end{align}
				This approach enables scanning different  angles simultaneously with a single OFDM signal.
				
				The DMA array transmits an OFDM pilot signal with $K_{\mathsf{tr}}$ subcarriers for beam training. Let the $k$th subcarrier frequency be $f_k$. The DMA beamforming vector configured according to \eqref{eqn: beam training resonant frequencies} has the response  $\bff_{\mathsf{ADMA}}(f_k)$ for the $k$th subcarrier.
			Let the actual AoD be $\phi$. The beamforming gain at the receiver at the $k$th subcarrier is $	\cG_{\mathsf{ADMA}}(\phi, f_k)$. The receiver computes the subcarrier index with the highest beamforming gain as 
			\begin{align}\label{eqn: k star}
				k^{\star}=\underset{k}{\mathsf{argmax}}\quad  \cG_{\mathsf{ADMA}}(\phi, f_k).
			\end{align}
			It sends feedback of the optimal subcarrier index to the transmit array.
		For the data transmission stage, the transmit DMA array is configured to operate for a specific bandwidth $\sfB$ centered around $f_{	k^{\star}}$. All DMA resonant frequencies are set to $f_{\sfr, n, m}=f_{	k^{\star}}$.
		With this configuration, the maximum beamforming gain of $N_\sfz^2 N_\sfy^2 $ is attained at $f_{	k^{\star}}$ and  $\hat{\phi}$  where $\hat{\phi}=\sin^{-1}\left(\frac{c}{d_\sfy f_{	k^{\star}} }-n_\sfg\right).$

\subsection{Beam training codebook design }

In this section, we propose a codebook for the single-shot beam training approach.
We first analyze the performance loss in beamforming when DMA resonant frequencies are set based on the subcarrier index obtained from the training procedure by simplifying $\cG_{\mathsf{ADMA}}(\phi, f_{k^{\star}})$ as
				\begin{align}\label{eqn: Gdmaa in terms of phit hat}
					\cG_{\mathsf{ADMA}}(\phi, f_{k^{\star}}) =N_\sfz^2\left(\frac{\sin\left( \pi  N_\sfy\frac{  (n_\sfg^{\star} +\sin(\phi))}{(n_\sfg^{\star} +\sin(\hat{\phi}))} \right)}{\sin\left( \pi \frac{  (n_\sfg^{\star} +\sin(\phi))}{(n_\sfg^{\star} +\sin(\hat{\phi}))} \right)}\right)^2 .
				\end{align}
	From  \eqref{eqn: Gdmaa in terms of phit hat}, we see that the maximum beamforming gain of $N_\sfz^2 N_\sfy^2$ is attained when $\hat{\phi}= \phi $.
	As the error between the actual and estimated  angle increases, the beamforming gain depreciates.			In the following lemma, we quantify the allowable range of  angle estimates for which the beamforming gain is above  a certain threshold of $\delta$ times the maximum gain.		
		Let $\Psi=\frac{  (n_\sfg^{\star} +\sin(\phi))}{(n_\sfg^{\star} +\sin(\hat{\phi}))}$.
		The maximum value of $\left(\frac{\sin(\pi N_\sfy \Psi)}{\sin(\pi \Psi)}\right)^2$ is $N_\sfy^2 $.
			We define 
	$\Psi_\delta( N_\sfy)$ such that for $\Psi \in [-\Psi_\delta( N_\sfy), \Psi_\delta( N_\sfy)]$, the term $\left(\frac{\sin(\pi N_\sfy \Psi)}{\sin(\pi \Psi)}\right)^2 \geq { \delta N_\sfy^2}$ for $\delta <1$ We can compute $\Psi_\delta( N_\sfy)$ numerically. For example, for $\delta=0.5$ (3 dB from maximum), we have $\Psi_\delta( N_\sfy) \approx \frac{0.448}{N_\sfy}$.
				\begin{lemma}\label{thm: GDMAA gain threshold}
				The inequality $	\cG_{\mathsf{ADMA}}(\phi, f_{k^{\star}}) \geq {\delta N_\sfy^2 N_\sfz^2},$
is satisfied when
					the angle estimate $\hat{\phi}$ is bounded as follows.
					\begin{align}\label{eqn: phi t hat bounds}
						\hat{\phi}\! \in \!\! \left[\sin^{-1}\!\left(\!\frac{n_\sfg^{\star}+\sin(\phi)}{1+\Psi_\delta( N_\sfy)}-n_\sfg^{\star}\!\right)\!, \sin^{-1}\!\left(\!\frac{n_\sfg^{\star}+\sin(\phi)}{1-\Psi_\delta( N_\sfy)}-n_\sfg^{\star}\!\right)\! \right].
					\end{align}
				\end{lemma}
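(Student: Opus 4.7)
The plan is to reduce the stated gain inequality to a one-dimensional constraint on the auxiliary quantity $\Psi=\frac{n_\sfg^{\star}+\sin(\phi)}{n_\sfg^{\star}+\sin(\hat{\phi})}$ and then invert the monotone relation between $\hat{\phi}$ and $\Psi$. First, I would substitute the closed form for $\cG_{\mathsf{ADMA}}(\phi,f_{k^{\star}})$ from \eqref{eqn: Gdmaa in terms of phit hat} into the target inequality $\cG_{\mathsf{ADMA}}(\phi,f_{k^{\star}})\geq \delta N_\sfy^2 N_\sfz^2$. The common factor $N_\sfz^2$ cancels, leaving
\begin{align*}
\left(\frac{\sin(\pi N_\sfy \Psi)}{\sin(\pi \Psi)}\right)^2 \;\geq\; \delta N_\sfy^2 .
\end{align*}

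Next, I would exploit the fact that this squared ratio is periodic in $\Psi$ with period $1$: indeed $\sin^2(\pi(\Psi+1))=\sin^2(\pi\Psi)$ and $\sin^2(\pi N_\sfy(\Psi+1))=\sin^2(\pi N_\sfy\Psi)$ irrespective of the parity of $N_\sfy$. Consequently, the defining lobe of $\Psi_\delta(N_\sfy)$ centred at $\Psi=0$ translates to an identical lobe centred at the mainlobe peak $\Psi=1$, which is exactly the value corresponding to $\hat{\phi}=\phi$. Therefore the displayed inequality is equivalent to $\Psi\in\bigl[1-\Psi_\delta(N_\sfy),\,1+\Psi_\delta(N_\sfy)\bigr]$.

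Finally, I would invert the definition of $\Psi$ to obtain
\begin{align*}
\sin(\hat{\phi})=\frac{n_\sfg^{\star}+\sin(\phi)}{\Psi}-n_\sfg^{\star}.
\end{align*}
On the physically relevant interval where $n_\sfg^{\star}+\sin(\hat{\phi})>0$, the right-hand side is strictly decreasing in $\Psi$, so the larger endpoint $1+\Psi_\delta(N_\sfy)$ produces the smaller value of $\sin(\hat{\phi})$ and the smaller endpoint $1-\Psi_\delta(N_\sfy)$ produces the larger value. Applying the monotone $\sin^{-1}$ to the resulting interval for $\sin(\hat{\phi})$ delivers precisely the bounds in \eqref{eqn: phi t hat bounds}.

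I expect the only delicate point to be the restriction to the mainlobe around $\Psi=1$ rather than a neighbouring lobe about another integer. This requires $\Psi_\delta(N_\sfy)<1$, so that the two endpoints in \eqref{eqn: phi t hat bounds} bracket a single connected arc of feasible $\hat{\phi}$, and it requires the feedback rule \eqref{eqn: k star} together with the training configuration \eqref{eqn: beam training resonant frequencies} to guarantee that the selected subcarrier places $\Psi$ inside this mainlobe. Both conditions follow from the assumption $\delta<1$ in the definition of $\Psi_\delta(N_\sfy)$ and from the sectorization of $[-\phi_{\mathsf{max}},\phi_{\mathsf{max}}]$ into $L$ sub-arrays, so no additional argument is needed.
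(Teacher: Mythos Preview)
Your proposal is correct and follows the same route as the paper: reduce \eqref{eqn: Gdmaa in terms of phit hat} to the constraint $1-\Psi_\delta(N_\sfy)\leq \Psi\leq 1+\Psi_\delta(N_\sfy)$ and then invert to obtain \eqref{eqn: phi t hat bounds}. You are in fact more explicit than the paper, which simply asserts the inequality on $\Psi$ without spelling out the period-$1$ shift from the defining lobe at $\Psi=0$ to the mainlobe at $\Psi=1$, and without recording the monotonicity that orders the two endpoints; your added justification on both points is sound.
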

				\begin{proof}
			 When $\hat{\phi}=\phi $, we have $\Psi=1$.  To satisfy the gain threshold, we have  $	1-\Psi_\delta( N_\sfy) \leq \frac{  (n_\sfg^{\star} +\sin(\phi))}{(n_\sfg^{\star} +\sin(\hat{\phi}))} \leq 1+\Psi_\delta( N_\sfy).$
				Simplifying, we obtain the  bounds on $	\hat{\phi}$ in  \eqref{eqn: phi t hat bounds}.
				\end{proof}
				
				\begin{figure}
					\centering
					\includegraphics[width=0.4\textwidth]{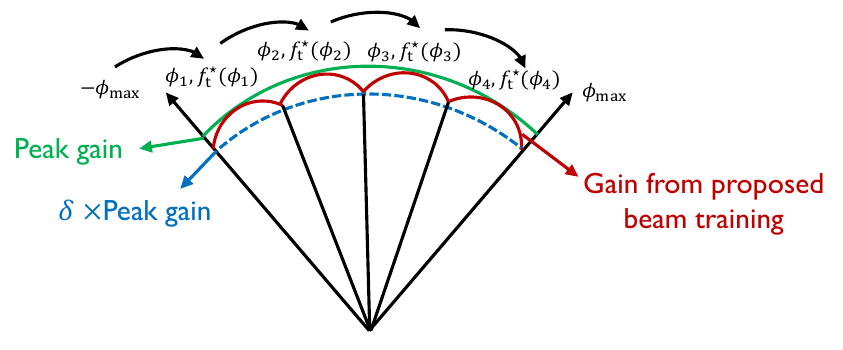}    
					\caption
					{The proposed beam codebook design ensures that the beamforming gain for all angles in the desired range is within a factor of $\delta$ from the maximum where $\delta$ is a design parameter.				
					}
					\label{fig: beam codebook design}
				\end{figure}
				
					\begin{figure}
					\centering
					\begin{subfigure}[t]{0.49\linewidth}
						\centering
						\includegraphics[width=1.1\linewidth]{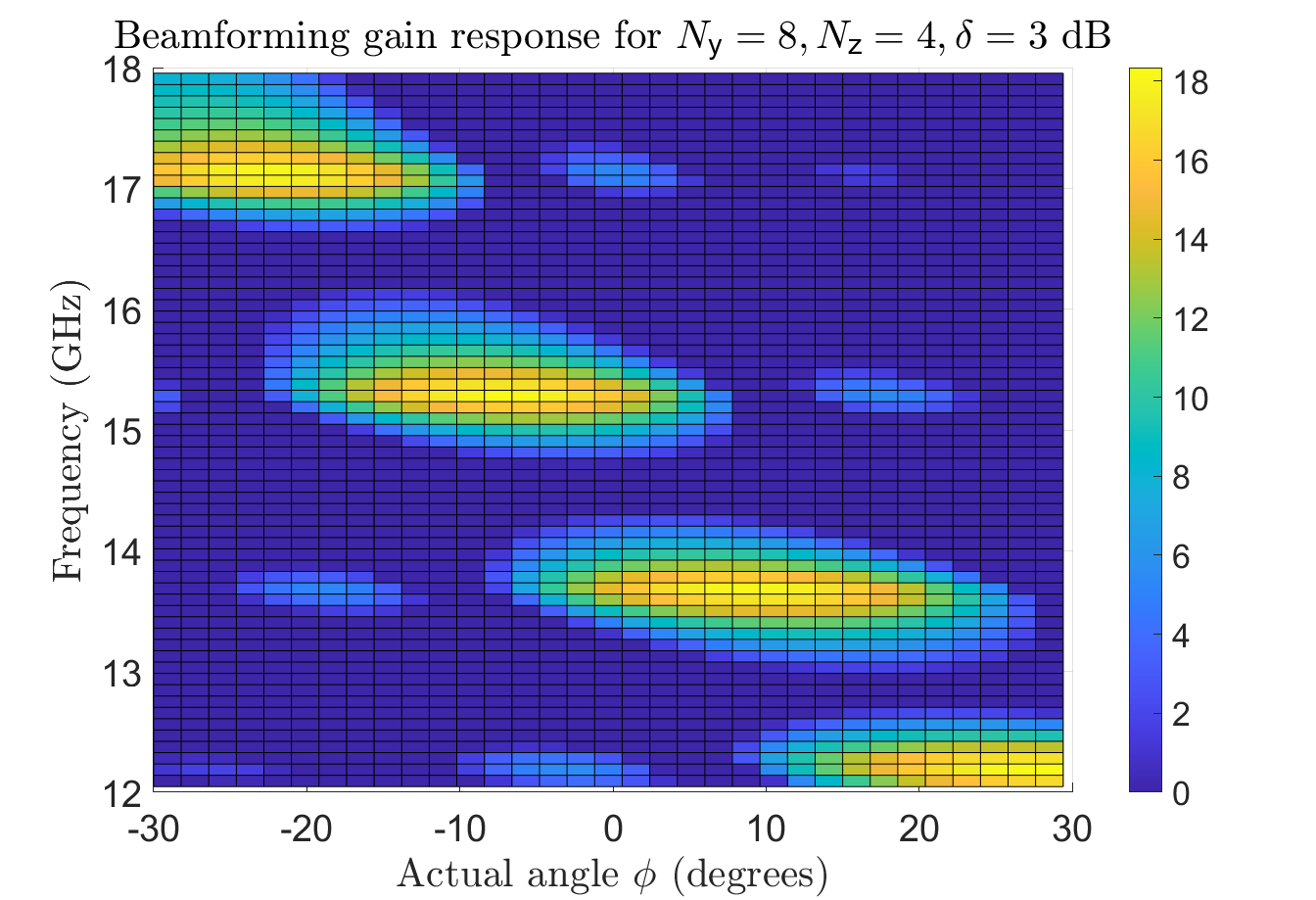}    
						\caption
						{Beam training with $N_\sfy=8$, $N_\sfz=4$, $\delta=3$ dB
						}
						\label{fig: bf gain Ny8 Nz4}
					\end{subfigure}~\hfil
					\begin{subfigure}[t]{0.49\linewidth}
						\centering
						\includegraphics[width=1.1\linewidth]{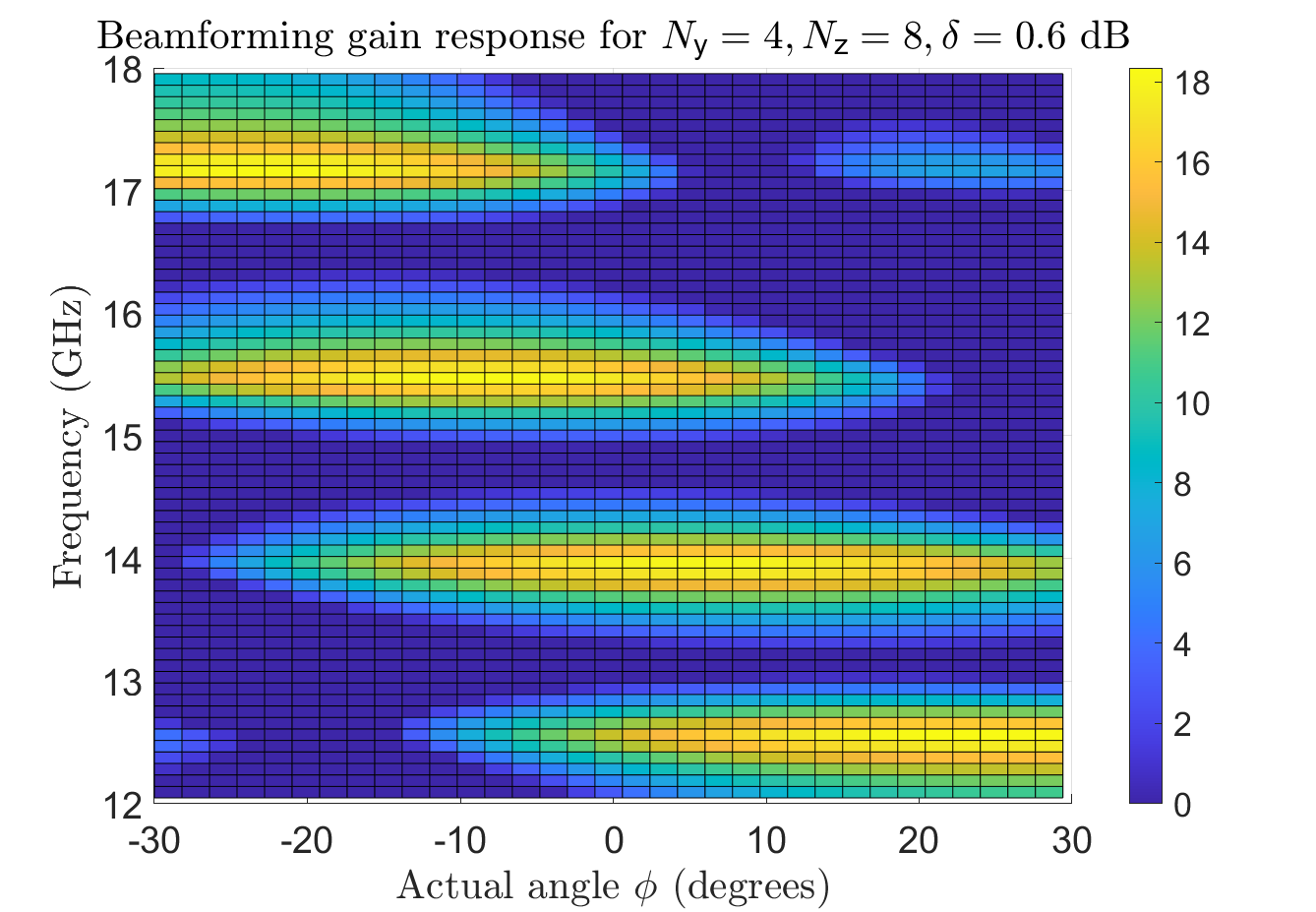}    
						\caption
						{Beam training with $N_\sfy=4$, $N_\sfz=8$, $\delta=0.6$ dB
						}
						\label{fig: bf gain Ny4 Nz8}
					\end{subfigure}
					\caption{The optimal operating frequency decreases in $L=4$ steps as a function of the actual angle similar to the monotonic decreasing trend obtained for the continuous curve in Fig.~\ref{fig: optimal_target_freq_target_angle}. For the same $L$, the angular beam width is narrower for $N_\sfy=8$ compared to $N_\sfy=4$.  }
				\end{figure}

		\begin{figure}
	\centering
	\includegraphics[width=0.4\textwidth]{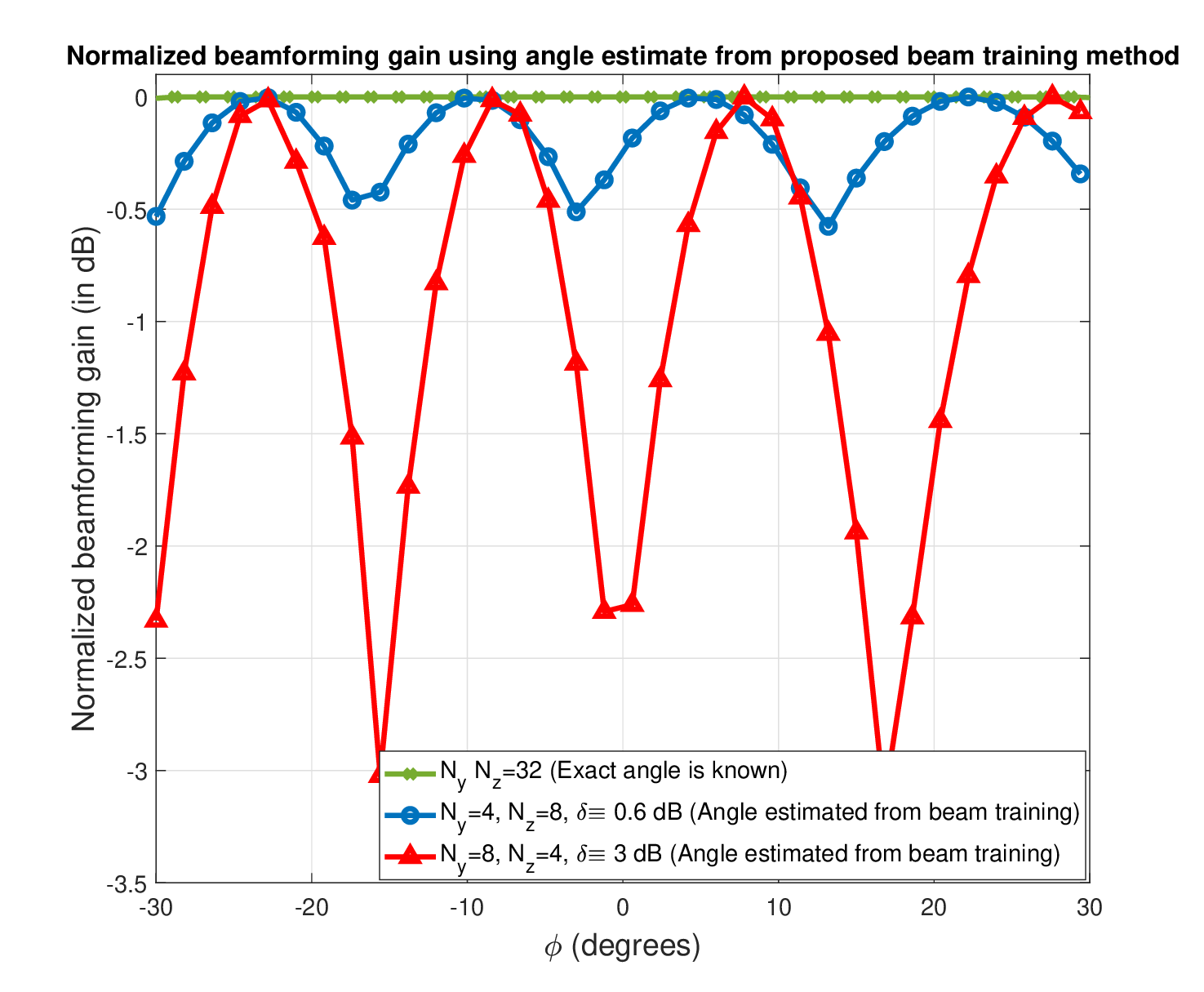}    
	\caption
	{Beamforming gain at the optimal operating frequency obtained using  angle estimate depends on $N_\sfy$ and $N_\sfz$ whereas beamforming gain with perfect angle depends solely on the product $N_\sfy N_\sfz$.				
	}
	\label{fig: Estimated target angle and optimal target frequency}
\end{figure}

				\textbf{ Beam codebook design:} 
				The goal of beam codebook design is to choose the angles where peak gain occurs for the $L$ sectors in the beam training procedure to ensure that the beamforming gain $\cG_{\mathsf{ADMA}}(\phi, f_{k^{\star}})$ for any $\phi$ in the desired range does not drop below a certain threshold of $\delta$ from the peak gain. 
				We use the result from Lemma~\ref{thm: GDMAA gain threshold} to design the beam codebook.
				 For the $\ell$th sector, let $\phi_\ell$ be the angle with the peak gain, i.e., $\cG_{\mathsf{ADMA}}(\phi_\ell, f_{k^{\star}}) =N_\sfy^2 N_\sfz^2 ~\forall \ell=1, \dots, L$. 
				The main objective of the beam codebook design is to ensure that the beamforming gain for all angles in the range $[-\phi_{\mathsf{max}}, \phi_{\mathsf{max}}]$ is above $\delta N_\sfy^2 N_\sfz^2$.
				The peak angle of the first sector is designed such that the gain at  $-\phi_{\mathsf{max}}$ is $\delta N_\sfy^2 N_\sfz^2$. Equating the lower limit of $\hat{\phi}$ when $\phi=\phi_1$ in \eqref{eqn: phi t hat bounds}  to $-\phi_{\mathsf{max}}$, we obtain the expression of $\phi_1$ as
				\begin{equation}\label{eqn: phi1}
					\phi_{1}=\sin^{-1}\left((\sin(-\phi_{\mathsf{max}})+n_\sfg^{\star}) (1+\Psi_\delta( N_\sfy))-n_\sfg^{\star}\right).
				\end{equation}
				For the remaining sectors, we equate the lower limit of  $\hat{\phi}$  when $\phi=\phi_{\ell}$ to the upper limit when $\phi=\phi_{\ell-1}$. With this procedure, the remaining $L-1$ angles are obtained sequentially as 
					\begin{equation}\label{eqn: phi ;}
					\phi_{\ell}\!=\!\sin^{-1}\!\!\left(\!(\sin(\phi_{\ell-1})+n_\sfg^{\star}) \frac{(1+\Psi_\delta( N_\sfy))}{(1-\Psi_\delta( N_\sfy))}-n_\sfg^{\star}\!\right),\!  \forall \ell>1.
				\end{equation}
	The illustration of the beam codebook design procedure is shown in Fig.~\ref{fig: beam codebook design}.
	To compare differences in the arrangement of elements on the DMA, and the impact on beam training,
		we describe two example cases when $\phi_{\mathsf{max}}=30^{\circ}$ and $n_\sfg^{\star}=2.5$ as follows:
		\begin{itemize}
			\item For $N_\sfy=8$, $N_\sfz=4$, $\delta\equiv 3$ dB, we have $	\phi_{1}=-22.83^{\circ}$, $	\phi_{2}=-7.9^{\circ}$, $	\phi_{3}=8.21^{\circ}$, $	\phi_{4}=27.16^{\circ}$.
			\item  For $N_\sfy\!=4$, $N_\sfz\!=8$, $\delta\! \equiv 0.6$ dB, we have $	\phi_{1}=-23.33^{\circ}$, $	\phi_{2}=-9.51^{\circ}$, $	\phi_{3}=5.22^{\circ}$, $	\phi_{4}=22.04^{\circ}$.  
		\end{itemize}
		
		The DMA resonant frequencies are set using \eqref{eqn: beam training resonant frequencies}.
		The beamforming gain response as a function of frequency and angle for these two cases is 
		  shown in Fig.~\ref{fig: bf gain Ny8 Nz4} and Fig.~\ref{fig: bf gain Ny4 Nz8}.
The optimal subcarrier index at the receiver is computed using \eqref{eqn: k star}.
	Using the optimal resonant configuration, we evaluate \eqref{eqn: Gdmaa in terms of phit hat} and plot the normalized beamforming gain at the optimal operating frequency as a function of the actual  angle in Fig.~\ref{fig: Estimated target angle and optimal target frequency}. From the plot, we verify that the beamforming gain variation is within the desired threshold. This validates the result in Lemma~\ref{thm: GDMAA gain threshold}. Using only four distinct values of resonant frequencies, it is possible to configure the DMA array such that the beamforming gain is within a certain threshold of the theoretical maximum obtained with perfect knowledge of the AoD.
		It suffices to use a DMA architecture with varactor diodes having the ability to be tuned to four distinct states instead of a continuous high resolution tunability.

				\section{Achievable rate results }\label{subsec: Achievable rate results}
				
				We evaluate the achievable rate performance of the DMA array using the results from Section~\ref{sec: Frequency-selective beamforming gain optimization and analysis} and Section~\ref{sec: Single-shot beam training with DMAs}.
In our proposed approach, the center frequency $f_{\sft}$ is dynamically reconfigured  depending on the AoD. We assume two cases for comparison:
					\textbf{Case 1-} When the AoD is perfectly known, the center frequency is set  using Lemma~\ref{thm:  optimal target freq};
					\textbf{Case 2-} When the AoD is unknown, we use the single-shot beam training procedure from Section~\ref{sec: Single-shot beam training with DMAs} and set  $f_{\sft}=f_{k^{\star}}$ obtained from \eqref{eqn: k star}.
				We also compare these two cases from our proposed approach to that of the benchmark DMA approach based on a fixed operating frequency.   For the benchmark  case, we set $f_{\sft}=f_{\sfc}$ which is the center frequency of the DMA feasible operating frequency range.
				To compare with TTD, we also plot the achievable rate for TTD in  Fig.~\ref{fig:rate  with target angle estimates}.
				
				\begin{figure}
					\centering
					\includegraphics[width=0.45\textwidth]{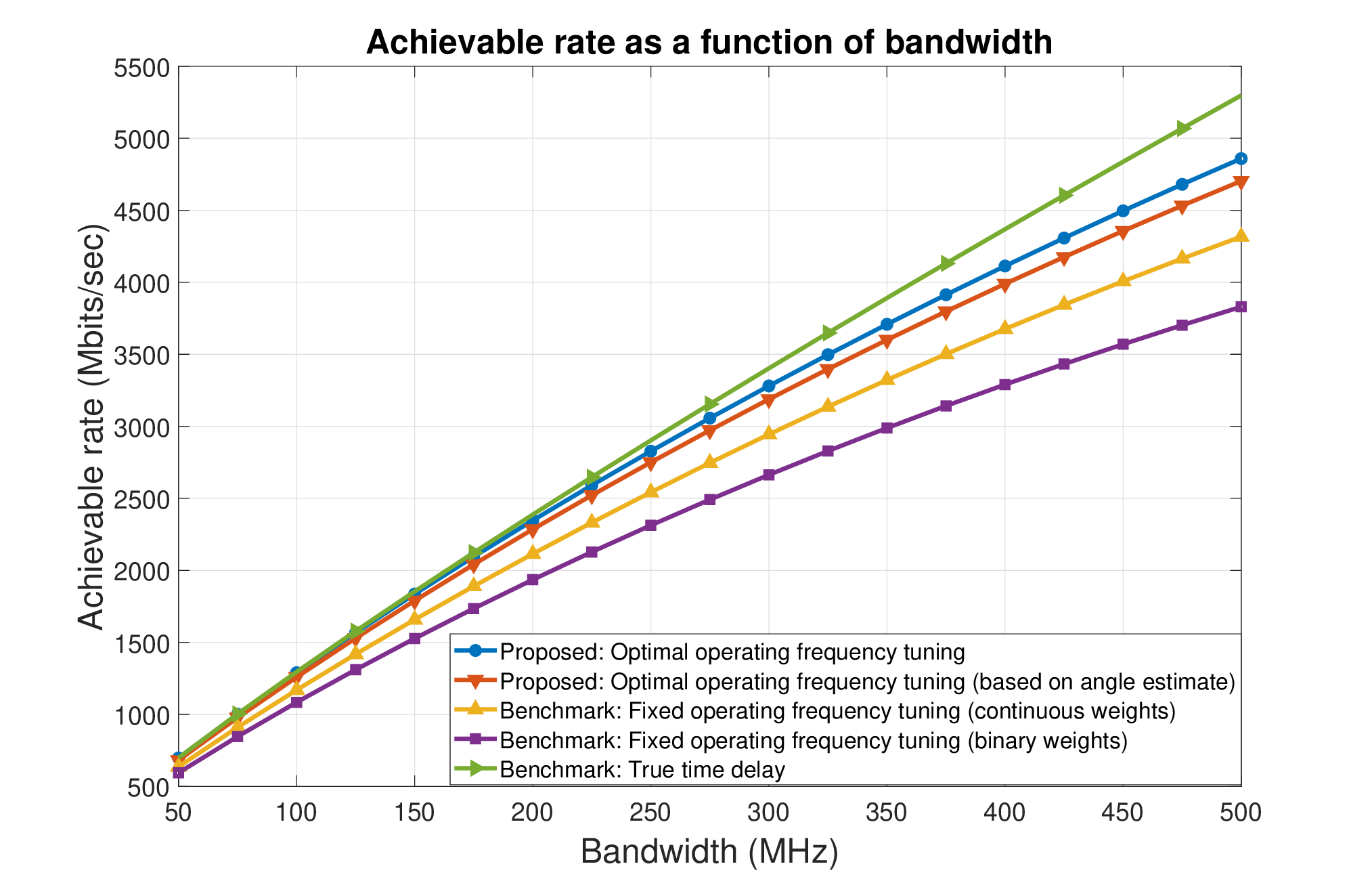}    
					\caption
					{The achievable rate DMA performance is comparable to the TTD  performance up to a bandwidth of 300 MHz which is also the 3 dB DMA bandwidth as shown in Fig.~\ref{fig: Beamforming_gain_freq_response_3dB}.
					}
					\label{fig:rate  with target angle estimates}
				\end{figure}
				\begin{figure}
					\centering
					\includegraphics[width=0.45\textwidth]{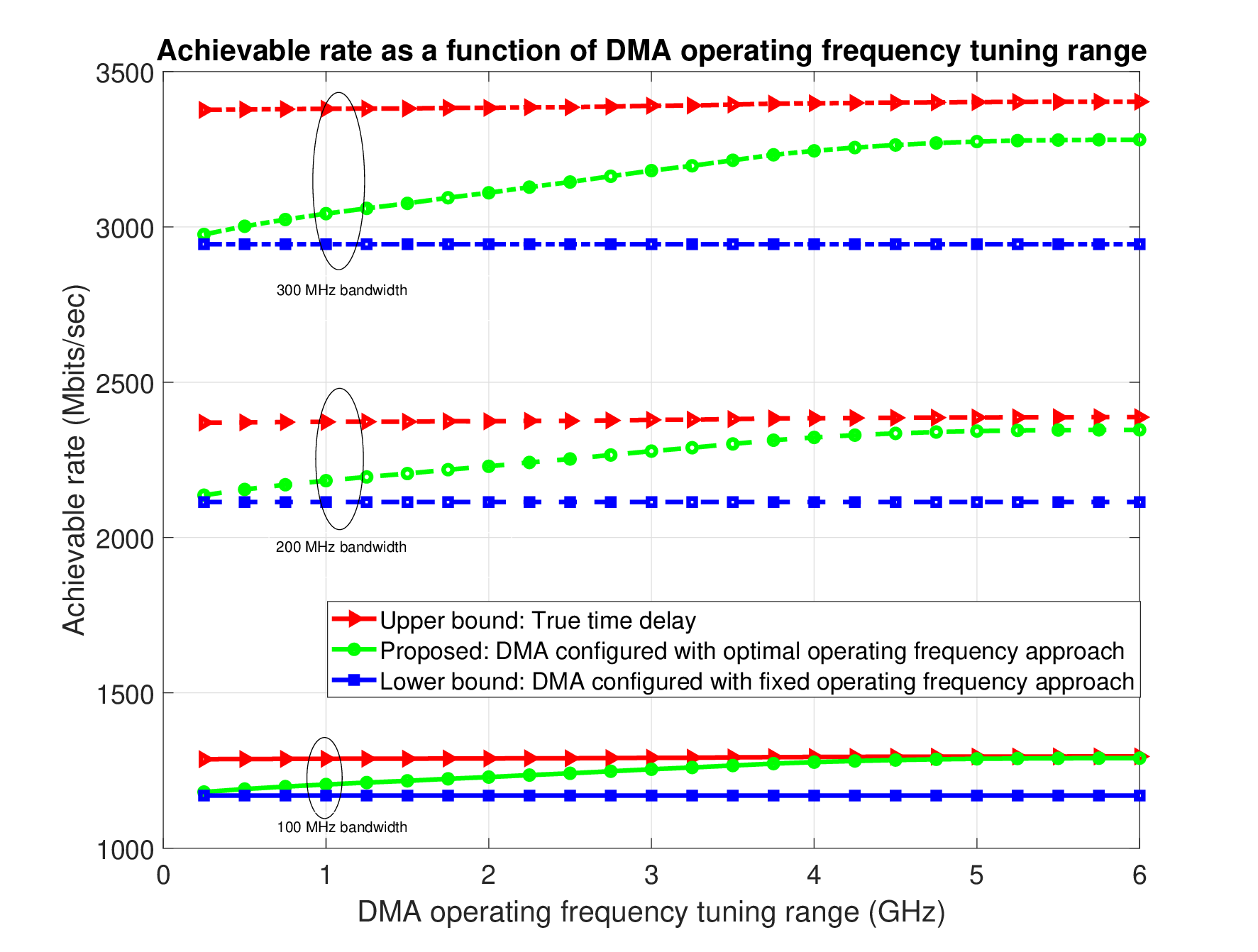}    
					\caption{
						DMA achievable rate based on optimal operating frequency approach is lower bounded by the DMA rate based on fixed frequency and upper bounded by the TTD rate for all tuning ranges.
					}
					\label{fig:rate  with vary tr}
				\end{figure}
				
				For  Fig.~\ref{fig:rate  with target angle estimates}, we set $\sfP=250$ mW, $r=500$ m, $n_\sfg=2.5$, $f_{\sfc}=15$ GHz, $d_\sfy=0.42 \lambda_{\sfc}$, $\Gamma=\frac{2\pi f_{\sfc}}{50}$, $N_\sfy=8$, $N_\sfz=4$, and $\sfT_{\sfr}=6$ GHz. For beam training, we use $L=4$ and the angles with peak gain for each sector are chosen corresponding to the threshold $\delta\equiv3$ dB.
				In Fig.~\ref{fig:rate  with target angle estimates},  we plot the achievable rate averaged over  angles $\phi\sim \cU(-30^{\circ}, 30^{\circ})$.  In the OFDM system, we have $K_{\sfd}=64$ subcarriers. On the $\sfx$ axis of Fig.~\ref{fig:rate  with target angle estimates}, we vary the communication bandwidth $\sfB$.
				We observe that the achievable rate of the DMA configured based on our proposed optimal operating frequency approach outperforms the benchmark approach based on the fixed operating frequency. We also observe that the rate performance of the DMA when resonant frequencies are configured based on the  single-shot beam training procedure is comparable to the rate obtained using perfect knowledge of the AoD. This validates the efficacy of the proposed single-shot beam training approach. 
				 We also showed in Fig.~\ref{fig: Estimated target angle and optimal target frequency} that with $L=4$ sectors, the beamforming gain varies within 3 dB of the maximum.
Hence, using only four resonant frequency configurations, it is possible to do both beam training and beamforming.
				In Fig.~\ref{fig:rate  with target angle estimates}, we see that the achievable rate increases with bandwidth. The DMA achievable rate is comparable to that of TTD  up to a bandwidth of 300 MHz (DMA 3 dB bandwidth) beyond which the performance gap increases. This is because of the   DMA beamforming gain frequency-selectivity as shown in Fig.~\ref{fig: Beamforming_gain_freq_response_3dB}.

	In Fig.~\ref{fig:rate  with target angle estimates}, we assumed that the DMA operating range $\sfT_{\sfr}$ is 6 GHz. In practice, this range depends on the reconfigurability of the DMA. In Fig.~\ref{fig:rate  with vary tr}, we vary $\sfT_{\sfr}$ by keeping all parameters same as Fig.~\ref{fig:rate  with target angle estimates} and plot the achievable rate obtained with three different values of the bandwidth $\sfB$. The baseline approach performance does not change with $\sfT_{\sfr}$ as the DMA operating frequency is always fixed to $f_{\sfc}$. We observe that the proposed approach of dynamically configuring the DMA operating frequency  outperforms the baseline fixed operating frequency approach for all tuning ranges. As $\sfT_{\sfr}$ increases, the DMA achievable rate   from the proposed approach tends to the TTD rate which acts as an upper bound.
	This implies that a DMA with high tuning range can be an alternative to the  power hungry TTD architecture.

				\section{Conclusion and future work}\label{sec: Conclusion and future work }
				
			In this paper, we proposed a novel approach where the system  configures the  DMA resonant frequencies by optimally choosing the  operating frequency based on the AoD.  We derived a closed-form  expression for the beamforming gain frequency response obtained using our proposed  approach.  We also developed a single-shot beam training approach for the DMA to estimate the optimal resonant frequency configuration. This approach gives comparable achievable rate performance  with that obtained using perfect knowledge of the AoD.
  Our  approach for configuring the DMA outperforms the benchmark fixed operating frequency approach for all tuning ranges. 
			For future work, we plan to investigate the performance of the proposed approach in more practical settings like multipath wireless channel, incorporating mutual coupling between elements~\cite{ramirez2022performance},   and impedance matching impairments~\cite{10437975}.
			We also plan to extend the DMA beamforming approach  to a multi-user scenario where reconfigurability of DMAs can be used for interference cancellation. 

						\bibliographystyle{IEEEtran}
					\bibliography{references_Nitish.bib}

					\end{document}